\newtheorem{theorem}{Theorem}
\newtheorem{remark}{Remark}
\newtheorem{lemma}[theorem]{Lemma}
\newtheorem{proposition}[theorem]{Proposition}
\newtheorem{corollary}[theorem]{Corollary}
\begin{document}

\title{Recovery of Sparse Signals via Generalized Orthogonal Matching Pursuit: A New Analysis}
\date{}
\author{\IEEEauthorblockN{Jian Wang$^{1,2}$, Suhyuk Kwon$^3$, Ping Li$^{2}$ and Byonghyo Shim$^{3}$} \\
\IEEEauthorblockA{$^1$Department of Electrical \& Computer Engineering, Duke University  \\
$^2$Department of Statistics \& Biostatistics, Department of Computer Science, Rutgers University  \\
$^3$Department of Electrical \& Computer
Engineering, Seoul National University  \\
Email: jian.wang@duke.edu, pingli@stat.rutgers.edu, \{shkwon,bshim\}@snu.ac.kr}\\
 

} \maketitle

\maketitle

\begin{abstract}
As an extension of orthogonal matching pursuit (OMP) improving the
recovery performance of sparse signals, generalized OMP (gOMP) has
recently been studied in the literature.
In this paper, we present a new analysis of the gOMP algorithm using restricted isometry property (RIP). We show that if the measurement matrix $\mathbf{\Phi} \in \mathcal{R}^{m \times n}$
satisfies the RIP with
$$\delta_{\max \left\{9, S + 1 \right\}K} \leq \frac{1}{8},$$ then gOMP performs stable reconstruction of all $K$-sparse signals $\mathbf{x} \in \mathcal{R}^n$ from
the noisy measurements $\mathbf{y} = \mathbf{\Phi x} + \mathbf{v}$ within $\max \left\{K, \left\lfloor \frac{8K}{S} \right\rfloor \right\}$ iterations where $\mathbf{v}$ is the noise vector and $S$ is the number of indices chosen in each iteration of the gOMP algorithm. 
For Gaussian random measurements, our results indicate that
the number of required measurements is essentially $m = \mathcal{O}(K \log
\frac{n}{K})$, which is a significant improvement over the existing
result $m = \mathcal{O}(K^2 \log \frac{n}{K})$, especially for large $K$.
\end{abstract}

\begin{keywords} Sparse Recovery, Compressed Sensing (CS), Generalized Orthogonal Matching Pursuit (gOMP),
Restricted Isometry Property (RIP), Stability, Mean Square Error (MSE).
\end{keywords}

\IEEEpeerreviewmaketitle

\setcounter{page}{1}

\section{Introduction} \label{sec:int}
Orthogonal matching pursuit (OMP) is a greedy algorithm widely used
for the recovery of sparse signals~\cite{pati1993orthogonal,mallat1993matching,tropp2007signal,davenport2010analysis,cai2011orthogonal,wang2012Recovery,ding2013perturbation}.
The goal of OMP is to recover a $K$-sparse signal vector $\mathbf{x}
\in \mathcal{R}^n$ ($\|\mathbf{x}\|_0 \leq K$) from its linear measurements
\begin{equation}
  \mathbf{y} = \mathbf{\Phi x}  \label{eq:noisy}
\end{equation}
where $\mathbf{\Phi} \in \mathcal{R}^{m \times n}$ is called the
measurement matrix. At each iteration, OMP estimates the support
(positions of non-zero elements) of $\mathbf{x}$ by adding an index
of the column in $\mathbf{\Phi}$ that is mostly correlated with the current
residual. The vestige of columns in the estimated support is then
eliminated from the measurements $\mathbf{y}$, yielding an updated
residual for the next iteration. See~\cite{pati1993orthogonal,tropp2007signal} for details on the OMP algorithm.

While the number of iterations of the
OMP algorithm is typically set to the sparsity level $K$ of the
underlying signal to be recovered, there have been recent efforts to
relax this constraint with an aim of enhancing the recovery
performance. In one direction, an approach allowing more iterations
than the sparsity has been
suggested~\cite{livshits2012efficiency,zhang2011sparse,foucart2013stability}.
In another direction, algorithms identifying multiple indices at each
iteration have been proposed. Well-known examples include stagewise
OMP (StOMP)~\cite{donoho2006sparse}, regularized OMP
(ROMP)~\cite{needell2010signal}, CoSaMP~\cite{needell2009cosamp},
and subspace pursuit (SP)~\cite{dai2009subspace}. Key feature of
these algorithms is to introduce special operations in the identification step to select multiple promising indices. Specifically, StOMP
picks indices whose magnitudes of correlation exceed a deliberately
designed threshold. ROMP chooses a set of $K$ indices and then reduces the number of candidates using a predefined regularization rule. CoSaMP and SP add multiple indices and then prune a large portion of chosen indices to refine the identification step. 
In contrast to
these algorithms performing deliberate refinement of the identification step, a recently proposed extension of OMP, referred to as
generalized OMP (gOMP)~\cite{wang2012Generalized} (also known as
OSGA or
OMMP~\cite{liu2012orthogonal,liu2012super,huang2011recovery}),
simply chooses $S$ columns that are mostly correlated with the
residual. A detailed description of the gOMP algorithm is given in Table~\ref{gOMP}. 
%
%
%
%
%
\setlength{\arrayrulewidth}{1.5pt}
\begin{table}[t]
  \centering
      \caption[]{The gOMP Algorithm \label{gOMP}}
      \vspace{-1mm}
\begin{tabular}{@{}ll}
\hline \\ \vspace{-15pt} \\
\textbf{Input}:      &measurement matrix $\mathbf{\Phi} \in \mathcal{R}^{m \times n}$,\\
                     &measurements $\mathbf{y} \in \mathcal{R}^{m}$, \\
                     &sparsity level $K$, \\
                     &number of indices for each selection $S \leq K$. \\
\textbf{Initialize}: &iteration count $k = 0$, \\
                     &estimated list $T^{0} = \emptyset$, \\
                     &residual vector $\mathbf{r}^{0} = \mathbf{y}$.
                     \\
\textbf{While}  &$\| \mathbf{r}^{k} \|_2 > \epsilon$ and $k < \frac{m}{S}$ \textbf{do} \\
&$k = k + 1$.\\
            & $\Lambda^k = \arg \max_{\Lambda: |\Lambda| = S} \|(\mathbf{\Phi}' \mathbf{r}^{k-1})_{\Lambda}\|_1$. \hspace{0.016in}\textbf{(Identification)} \\
            & $T^k = T^{k-1} \cup \Lambda^k$. \hspace{0.993in}\textbf{(Augmentation)} \\
            & $\hat{\mathbf{x}}^k = \arg \min_{supp(\mathbf{u}) = T^k} \|\mathbf{y}-\mathbf{\Phi} \mathbf{u}\|_2$. \hspace{0.016in}\textbf{(Estimation)} \\
            & $\mathbf{r}^k = \mathbf{y} - \mathbf{\Phi} \hat{\mathbf{x}}^k$. \hspace{1.09in} \textbf{(Residual Update)} \\
\textbf{End}  \\
\textbf{Output}     & the estimated support $\hat{{T}} =
\underset{\mathcal{A}:|\mathcal{A}| = K}{\arg \min} \|\hat{\mathbf{x}}^k -
\hat{\mathbf{x}}^k_\mathcal{A}\|_2$ and  \\
& signal $\hat{\mathbf{x}}$ satisfying $\hat{\mathbf{x}}_{\{1, \cdots, n\} \setminus \hat{{T}}} = \mathbf{0}$ and $\hat{\mathbf{x}}_{\hat{{T}}} = \mathbf{\Phi}^\dag_{\hat{{T}}} \mathbf{y}$.  \\ 
\vspace{-6pt} \\
\hline
\end{tabular}
\end{table}
\setlength{\arrayrulewidth}{1.3pt}

The main motivation of gOMP is to reduce the computational
complexity and at the same time speed up the processing time. Since the OMP algorithm chooses one index is at a time, the running time and the computational complexity depend heavily on the sparsity $K$. When $K$ is large, therefore, the computational cost and the running time of OMP might be problematic. 
In the gOMP algorithm, however, more than one ``correct'' index can be chosen in each iteration due to the identification of multiple indices at a time, so that the number of iterations needed to finish the algorithm is usually much smaller than that of the OMP algorithm. 
In fact, it has been shown that the
computational complexity of gOMP is $2s m n + (2 S^2 + S) s^2 m$ where $s$ is the number of actually performed iterations~\cite{wang2012Generalized}, while that of OMP is $2K mn + 3K^2 m$. Since $s$ is in general much smaller than $K$, the gOMP algorithm runs faster than the OMP algorithm and also has lower computational complexity.

In analyzing the theoretical performance of gOMP, the restricted isometry property (RIP) has been popularly used~\cite{liu2012orthogonal,wang2012Generalized,liu2012super,huang2011recovery,maleh2011improved,satpathi2013improving,shen2014analysis,dan2014analysis,li2015sufficient}.
A measurement matrix $\mathbf{\Phi}$ is said to satisfy the RIP of
order $K$ if there exists a constant $\delta(\mathbf{\Phi}) \in
[0,1)$ such that~\cite{candes2005decoding}
\begin{equation}
    (1-\delta(\mathbf{\Phi})) \| \mathbf{x} \|_2^2 \leq \| \mathbf{\Phi}\mathbf{x} \|_2^2 \leq (1+\delta(\mathbf{\Phi})) \| \mathbf{x} \|_2^2        \label{eq:rip}
\end{equation}
for any $K$-sparse vector $\mathbf{x}$.
In particular, the minimum of all constants $\delta (\mathbf{\Phi})$
satisfying \eqref{eq:rip} is called the restricted isometry constant (RIC) and denoted as
$\delta_K (\mathbf{\Phi})$. In the sequel, we use $\delta_K$ instead
of $\delta_K (\mathbf{\Phi})$ for notational simplicity.
In~\cite{wang2012Generalized}, it has been shown that gOMP ensures the perfect recovery of any $K$-sparse signal $\mathbf{x}$ from $\mathbf{y} = \mathbf{\Phi x}$ within $K$ iterations under
\begin{equation} \label{eq:gOMPC}
 \delta_{SK} < \frac{\sqrt S}{\sqrt K + 3 \sqrt S},
\end{equation}
where $S$ is the number of indices chosen in each iteration.
In the noisy scenario where the the measurements are corrupted by a noise vector
$\mathbf{v}$ (i.e., $\mathbf{y} = \mathbf{\Phi x} + \mathbf{v}$), it has been shown that the output $\hat{\mathbf{x}} $
of gOMP after $K$ iterations satisfies~\cite{wang2012Generalized}
\begin{equation} \label{eq:noisy6}
  \|\hat{\mathbf{x}} - \mathbf{x}\|_2 \leq C \sqrt K \|\mathbf{v}\|_2
\end{equation}
under \eqref{eq:gOMPC} and $\delta_{SK + S} < 1$ where $C$ is a constant.

%
%
While the empirical recovery performance of gOMP is promising, theoretical results we just described are relatively weak when compared to the state-of-the-art recovery algorithms. For example, performance guarantees of basis pursuit (BP)~\cite{chen1998atomic} and CoSaMP are given by $\delta_{2K} < \sqrt 2 - 1$~\cite{candes2008restricted} and $\delta_{4K} < 0.1$~ \cite{needell2009cosamp}, while conditions for gOMP require that the RIC should be inversely proportional to $\sqrt K$~\cite{liu2012orthogonal,wang2012Generalized,liu2012super,huang2011recovery,maleh2011improved,satpathi2013improving,shen2014analysis,dan2014analysis,li2015sufficient}. 
Another weakness among existing theoretical results of gOMP lies in the lack of stability guarantees for the  noisy scenario where the measurements are corrupted by the noise. 
For example, it can be seen from~\eqref{eq:noisy6} that the $\ell_2$-norm of the recovery error of gOMP is upper bounded by $C \sqrt{{K}} \|\mathbf{v}\|_2$. This implies that even for a small $\|\mathbf{v}\|_2$, the $\ell_2$-norm of recovery error can be unduly large when the sparsity $K$ approaches infinity. In contrast, BP denoising (BPDN) and CoSaMP are known to have recovery error bound directly proportional to $\|\mathbf{v}\|_2$ and hence are stable under measurement noise~\cite{needell2009cosamp,chen1998atomic}.

The main purpose of this paper is to provide an improved performance analysis of the gOMP algorithm. Specifically, we show that if the measurement matrix $\mathbf{\Phi}$
satisfies the RIP with
\begin{equation} \label{eq:nonoise}
 \delta_{\max \{9, S + 1\}K} \leq \frac{1}{8},
\end{equation}
gOMP achieves stable recovery of any $K$-sparse signal $\mathbf{x}$
from the noisy measurements $\mathbf{y} = \mathbf{\Phi x} +
\mathbf{v}$ within $\max \left\{K, \left\lfloor \frac{8K}{S} \right\rfloor \right\}$ iterations. That is, the $\ell_2$-norm of recovery error satisfies
\begin{equation} \label{eq:robustbound}
  \|\hat{\mathbf{x}} - \mathbf{x}\|_2 \leq C \|\mathbf{v}\|_2,
\end{equation}
where $C$ is a constant. In the special case where $\|\mathbf{v}\|_2 = 0$ (i.e., the noise-free case), we show that gOMP accurately recovers all $K$-sparse signals in $\max \left\{K, \left\lfloor \frac{8K}{S} \right\rfloor\right\}$ iterations under 
\begin{equation}
\delta_{7K} \leq \frac{1}{8}.
\end{equation}
When compared to previous results~\cite{liu2012orthogonal,wang2012Generalized,liu2012super,huang2011recovery,maleh2011improved,satpathi2013improving,shen2014analysis,dan2014analysis,li2015sufficient},
our new results are important in two aspects. 

\vspace{1mm}
\begin{enumerate}[i)]
\item 
Our results show that the gOMP algorithm can recover sparse signals under the similar RIP condition that the state-of-the-art sparse recovery algorithms (e.g., BP and CoSaMP) require. For Gaussian random measurement matrices, this implies that the number of measurements required for gOMP is essentially $m = \mathcal{O}\left(K \log \frac{n}{K}
\right)$~\cite{candes2005decoding,baraniuk2008simple}, which is significantly smaller than the result $m = \mathcal{O}\left(K^2 \log \frac{n}{K} \right)$ obtained in previous works.


\vspace{1mm}
\item While previous work showed that the $\ell_2$-norm of the recovery error in the noisy scenario depends
linearly on $\sqrt{K} \|\mathbf{v}\|_2$~\cite{wang2012Generalized},
our new result suggests that the recovery distortion of gOMP  is upper bounded by a constant times $\|\mathbf{v}\|_2$, which strictly ensures the stability of gOMP under measurement noise.
 
\end{enumerate}

We briefly summarize notations used in this paper.
For a vector $\mathbf{x} \in \mathcal{R}^n$, $T = supp(\mathbf{x})=
\{i |x_i\neq 0\}$ represents the set of its non-zero positions.
$\Omega = \{1, \cdots, n\}$. For a set $A \subseteq \Omega$, $|A|$
denotes the cardinality of $A$. $T \backslash A$ is the set of all
elements contained in $T$ but not in $A$.
${{\mathbf{\Phi }}_A} \in {\mathcal{R}^{m \times \left| A \right|}}$
is the submatrix of ${\mathbf{\Phi }}$ that only contains columns
indexed by $A$. $\mathbf{\Phi}_{A}'$ means the transpose of the
matrix $\mathbf{\Phi}_{A}$.
${{\mathbf{x}}_A} \in \mathcal{R}^{|A|}$ is the vector which equals
$\mathbf{x}$ for elements indexed by $A$. If $\mathbf{\Phi}_A$ is
full column rank, then $\mathbf{\Phi}_A^\dagger =
(\mathbf{\Phi}'_A\mathbf{\Phi}_A)^{-1}\mathbf{\Phi}'_A$ is the
pseudoinverse of $\mathbf{\Phi}_A$.
$span(\mathbf{\Phi}_A)$ stands for the span of columns in
$\mathbf{\Phi}_A$. $\mathcal{P}_{A}=\mathbf{\Phi}_{A}
\mathbf{\Phi}_{A}^\dagger$ is the projection onto
$span(\mathbf{\Phi}_{A})$.
$\mathcal{P}_{A}^\bot = \mathbf{I}-\mathcal{P}_{A}$ is the
projection onto the orthogonal complement of
$span(\mathbf{\Phi}_{A})$. At the $k$th iteration of gOMP, we use $T^k$, $\Gamma^k = T \backslash T^k$,
$\hat{\mathbf{x}}^{k}$ and $\mathbf{r}^{k}$ to denote the estimated
support, the remaining support set, the estimated sparse signal, and
the residual vector, respectively.

%
%
%
%
%
%
%
%
%

The remainder of the paper is organized as follows. In Section
\ref{sec:thmsim}, we provide theoretical and empirical results of gOMP. In Section~\ref{sec:proof}, we
present the proof of theoretical results and conclude the
paper in Section \ref{sec:conclusion}.

%
%
\section{Sparse Recovery with gOMP} \label{sec:thmsim}

\subsection{Main Results} \label{sec:thm}

In this section, we provide performance guarantees of gOMP in recovering
sparse signals in the presence of noise. Since the noise-free scenario can be considered as a special case of the noisy scenario, extension to the noise-free scenario is straightforward.
%
%
%
%
%
%
%
%
%
%
%
%
%
%
%
%
%
%
%
%
%
%
%
%
%
%
%
%
%
%
In the noisy scenario, the perfect reconstruction of sparse signals cannot be done and hence we use the
$\ell_2$-norm of the recovery error as a performance measure. 
We first show that after a specified number of iterations, the $\ell_2$-norm of the residual of gOMP is upper bounded by a quantity depending only on $\|\mathbf{v}\|_2$.

\vspace{1mm}

\begin{theorem} \label{thm:general_2}
Let $\mathbf{x} \in \mathcal{R}^n$ be any $K$-sparse vector, $\mathbf{\Phi} \in \mathcal{R}^{m \times n}$ be a
measurement matrix, and $\mathbf{y} = \mathbf{\Phi x} + \mathbf{v}$
be the noisy measurements where $\mathbf{v}$ is a noise vector. Then under 
\begin{equation}
\delta_{\max \left\{Sk + 7 |\Gamma^k|, {Sk + S + |\Gamma^k|} \right\} } \leq \frac{1}{8}, \label{eq:RICC}
\end{equation}the residual of gOMP satisfies
\begin{equation}
  \big\|\mathbf{r}^{k + \max \big\{|\Gamma^k|, \big\lfloor \frac{8 |\Gamma^k|}{S}\big\rfloor \big\}} \big\|_2 \leq \mu_k \|\mathbf{v}\|_2
  \label{eq:result}
\end{equation} 
where $\mu_k$ is a constant depending only on $\delta_{\max \left\{Sk + 7 |\Gamma^k|, {Sk + S + |\Gamma^k|} \right\} }$.
\end{theorem}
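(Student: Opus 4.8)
The plan is to reduce the residual bound to a support-recovery statement: show that, starting from iteration $k$, gOMP has captured every remaining true index within the allotted $\max\{|\Gamma^k|, \lfloor 8|\Gamma^k|/S\rfloor\}$ iterations, after which the residual collapses to the noise floor. The endgame is clean. Writing $\mathbf{y} = \mathbf{\Phi}_{\Gamma^k}\mathbf{x}_{\Gamma^k} + \mathbf{\Phi}_{T\cap T^k}\mathbf{x}_{T\cap T^k} + \mathbf{v}$ and using $\mathbf{r}^k = \mathcal{P}_{T^k}^\bot \mathbf{y}$, the columns indexed by $T\cap T^k$ lie in $span(\mathbf{\Phi}_{T^k})$ and are annihilated, leaving $\mathbf{r}^k = \mathcal{P}_{T^k}^\bot \mathbf{\Phi}_{\Gamma^k}\mathbf{x}_{\Gamma^k} + \mathcal{P}_{T^k}^\bot \mathbf{v}$. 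In particular, once the true indices are exhausted ($\Gamma^{k'}=\emptyset$) one gets $\|\mathbf{r}^{k'}\|_2 = \|\mathcal{P}_{T^{k'}}^\bot \mathbf{v}\|_2 \le \|\mathbf{v}\|_2$, so the entire difficulty is the index-capturing argument that precedes it.

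The heart of the proof is a progress lemma controlling how many of the remaining true indices $\Gamma^k$ are selected by the identification step. Since $\Lambda^{k+1}$ maximizes $\|(\mathbf{\Phi}'\mathbf{r}^k)_\Lambda\|_1$ over $|\Lambda|=S$, I would compare the correlation magnitudes $|(\mathbf{\Phi}'\mathbf{r}^k)_i|$ on the remaining true support $i\in\Gamma^k$ against those on the incorrect indices $i\in\Omega\setminus T$. Using the decomposition of $\mathbf{r}^k$ above together with the RIP, the correlations restricted to $\Gamma^k$ carry essentially all of the residual energy $\|\mathcal{P}_{T^k}^\bot \mathbf{\Phi}_{\Gamma^k}\mathbf{x}_{\Gamma^k}\|_2$, while the correlations on incorrect columns are uniformly small; the threshold $\delta\le\tfrac{1}{8}$ is exactly what forces the true correlations to dominate. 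The two orders $Sk+S+|\Gamma^k|$ and $Sk+7|\Gamma^k|$ in \eqref{eq:RICC} track, respectively, the support active in a single identification step (the estimate $T^k$ of size $\le Sk$, the $S$ fresh indices, and the true remainder $\Gamma^k$) and the total support accumulated across the whole window of iterations.

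To convert per-step progress into the iteration count $\max\{|\Gamma^k|, \lfloor 8|\Gamma^k|/S\rfloor\}$, I would use an amortized counting argument rather than insisting that a correct index be found at every single iteration — this is precisely what permits a constant RIP bound instead of the $\mathcal{O}(1/\sqrt{K})$ bound of \eqref{eq:gOMPC}. Concretely, while the residual stays above a fixed multiple of $\|\mathbf{v}\|_2$, I would show that the cumulative number of correct indices captured over any run of $u$ iterations is at least a fixed fraction of $Su$. Taking $u = \lfloor 8|\Gamma^k|/S\rfloor$ then drives this count up to $|\Gamma^k|$, forcing $\Gamma^{k+u}=\emptyset$; the companion term $|\Gamma^k|$ inside the maximum absorbs the boundary regime (small $|\Gamma^k|$ or the floor rounding to zero), where one still needs enough iterations to place each remaining index individually. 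The residual identity of the first paragraph then closes the bound, with $\mu_k$ obtained by propagating the RIP constants through the chain of inequalities.

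I expect the main obstacle to be exactly this amortized step. The selection rule may, in a given iteration, spend several of its $S$ slots on incorrect indices whose spurious correlations are inflated by $\mathbf{v}$, so a naive ``one correct index per iteration'' claim fails outright and is in fact what the weaker RIP regime cannot support. The technical crux is to lower bound the \emph{cumulative} number of correct captures by balancing the residual energy that must be removed against the energy that any single $S$-index selection can account for, controlling every cross term through $\delta\le\tfrac{1}{8}$. Making the constants close so that the rate comes out to the factor $8$ — and hence the clean $\lfloor 8|\Gamma^k|/S\rfloor$ — is where the bookkeeping is most delicate.
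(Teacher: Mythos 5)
Your endgame identity ($\Gamma^{k'}=\emptyset$ implies $\|\mathbf{r}^{k'}\|_2=\|\mathcal{P}^{\bot}_{T^{k'}}\mathbf{v}\|_2\le\|\mathbf{v}\|_2$) is exactly the paper's trivial base case, and your instinct to split into ``residual already at the noise floor'' versus ``progress is made'' is the right dichotomy. But the pillar you rest everything on --- that within the allotted window gOMP captures \emph{every} remaining true index, forcing $\Gamma^{k+u}=\emptyset$ --- is false in the noisy setting and cannot be a route to the theorem. A remaining entry with $|x_i|\ll\|\mathbf{v}\|_2$ produces a correlation swamped by the noise correlations of incorrect columns, and no constant-level RIP condition can force its selection: take $\Gamma^k$ with one entry of size $1$, ninety-nine entries of size $10^{-10}$, and $\|\mathbf{v}\|_2=10^{-5}$; the conclusion \eqref{eq:result} holds (once the large entry is caught, $\|\mathbf{r}\|_2\le(1+\delta)^{1/2}\|\mathbf{x}_{\Gamma}\|_2+\|\mathbf{v}\|_2\approx\|\mathbf{v}\|_2$), yet $\Gamma$ never becomes empty. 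Worse, your two ingredients contradict each other: the standing hypothesis of your amortized lemma is that the residual stays above $c\|\mathbf{v}\|_2$ throughout the window, but $\Gamma^{k+u}=\emptyset$ would force $\|\mathbf{r}^{k+u}\|_2\le\|\mathbf{v}\|_2$, so under that hypothesis the conclusion you want from the lemma can never occur.

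The missing mechanism also cannot be a lower bound on the \emph{number} of correct captures. Under $\delta\le\frac18$ the incorrect correlations can dominate the correct ones for many consecutive iterations (this is precisely why the older analyses needed $\delta=\mathcal{O}(1/\sqrt{K})$), and the residual shrinks even when only incorrect columns are selected, since least-squares fitting onto a larger subspace removes energy; so ``energy removed'' does not translate into ``correct indices selected.'' The paper never counts captures. It measures progress in energy: Propositions~\ref{prop:rk} and~\ref{prop:residual} show that, \emph{whatever} indices are chosen, the residual decays geometrically toward the reference levels $\|\mathbf{\Phi}_{\Gamma^k\setminus\Gamma^k_\tau}\mathbf{x}_{\Gamma^k\setminus\Gamma^k_\tau}+\mathbf{v}\|_2^2$ attached to a dyadic partition $\Gamma^k_\tau$ of the remaining support. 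Then the RIP lower bound $\|\mathbf{r}^{k+k_L}\|_2\ge(1-\delta_{|T\cup T^{k+k_L}|})^{1/2}\|\mathbf{x}_{\Gamma^{k+k_L}}\|_2-\|\mathbf{v}\|_2$, combined with the fact that $\mathbf{x}_{\Gamma^k\setminus\Gamma^k_{L-1}}$ consists of the \emph{smallest-magnitude} remaining entries, converts the energy inequality $\|\mathbf{x}_{\Gamma^{k+k_L}}\|_2<\|\mathbf{x}_{\Gamma^k\setminus\Gamma^k_{L-1}}\|_2$ into a strict decrease of the cardinality $|\Gamma|$; this feeds an induction on $|\Gamma^k|$, and the complementary case, where $\beta\|\mathbf{v}\|_2\ge(1-\alpha)\|\mathbf{x}_{\Gamma^k\setminus\Gamma^k_{L-1}}\|_2$, is exactly the case where \eqref{eq:result} already holds outright. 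This ordering-based bridge from energy decay to cardinality decrease, and the induction on $|\Gamma^k|$ that exploits it (with the tuned constant $\sigma=\frac12\exp(\frac{14}{9})$ producing the factor $8$), are the ideas your proposal is missing; without them the iteration-count bookkeeping has nothing to attach to.
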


\vspace{1mm}

The proof will be given in Section \ref{sec:proof}. One can observe from Theorem~\ref{thm:general_2} that if gOMP already performs $k$ iterations, then it requires at most $\max \left\{|\Gamma^k|, \left\lfloor  \frac{8 |\Gamma^k|}{S}\right\rfloor \right\}$ additional iterations to ensure that the $\ell_2$-norm of residual falls below $\mu_k\|\mathbf{v}\|_2$. In particular, when $k = 0$ (i.e., at the beginning of the iterations), $|\Gamma^k| = K$ and the RIC in \eqref{eq:RICC} is simplified to $\delta_{7K}$, and hence we have a simple interpretation of Theorem~\ref{thm:general_2}.

\vspace{1mm}

\begin{theorem} \label{cor:cor2}
Let $\mathbf{x} \in \mathcal{R}^n$ be any $K$-sparse vector,
$\mathbf{\Phi} \in \mathcal{R}^{m \times n}$ be the measurement
matrix, and $\mathbf{y} = \mathbf{\Phi x} + \mathbf{v}$ be the noisy
measurements where $\mathbf{v}$ is the noise vector. Then under $\delta_{7K} \leq \frac{1}{8}$,
the residual of gOMP satisfies
\begin{equation}
  \big\|\mathbf{r}^{\max \left\{K, \left\lfloor \frac{8 K }{S}\right\rfloor \right\}} \big\|_2 \leq \mu_0 \|\mathbf{v}\|_2,
\end{equation} 
where $\mu_0$ is a constant depending only on $\delta_{7K}$. 
\end{theorem}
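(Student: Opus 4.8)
The plan is to obtain Theorem~\ref{cor:cor2} directly from Theorem~\ref{thm:general_2} by specializing to the very first iteration, namely by setting $k=0$. At $k=0$ the estimated support is empty ($T^0=\emptyset$), so the remaining support is $\Gamma^0 = T\setminus T^0 = T = supp(\mathbf{x})$ and $|\Gamma^0| = |T| = \|\mathbf{x}\|_0 \le K$. Substituting $k=0$ into the conclusion \eqref{eq:result} of Theorem~\ref{thm:general_2} yields $\big\|\mathbf{r}^{\max\{|T|,\lfloor 8|T|/S\rfloor\}}\big\|_2 \le \mu_0\|\mathbf{v}\|_2$, which already has the right form; the only discrepancy with the stated corollary is that the iteration index there is expressed in terms of $K$ rather than $|T|$.

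First I would verify that the hypothesis $\delta_{7K}\le \tfrac{1}{8}$ implies the $k=0$ instance of the RIP condition \eqref{eq:RICC}. At $k=0$ the required order is $\max\{7|\Gamma^0|,\,S+|\Gamma^0|\} = \max\{7|T|,\,S+|T|\}$. Because $S\le K$ and $|T|\le K$, both arguments are at most $7K$: indeed $7|T|\le 7K$ and $S+|T|\le 2K\le 7K$. Using the monotonicity of the restricted isometry constant (if $j_1\le j_2$ then $\delta_{j_1}\le\delta_{j_2}$), we get $\delta_{\max\{7|T|,\,S+|T|\}}\le \delta_{7K}\le\tfrac{1}{8}$, so Theorem~\ref{thm:general_2} is applicable at $k=0$, and $\mu_0$ is precisely the value of $\mu_k$ at $k=0$, which depends only on $\delta_{7K}$.

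To close the gap between $|T|$ and $K$, I would invoke the monotonicity of the gOMP residual. Since the estimate $\hat{\mathbf{x}}^k$ is the least-squares solution over the nested supports $T^{k-1}\subseteq T^k$, the previous estimate $\hat{\mathbf{x}}^{k-1}$ remains feasible for the $k$-th minimization, and therefore $\|\mathbf{r}^k\|_2\le\|\mathbf{r}^{k-1}\|_2$; that is, the residual norm is nonincreasing in $k$. Because $K\ge|T|$ forces $\max\{K,\lfloor 8K/S\rfloor\}\ge\max\{|T|,\lfloor 8|T|/S\rfloor\}$, the extra iterations can only shrink the residual, giving $\big\|\mathbf{r}^{\max\{K,\lfloor 8K/S\rfloor\}}\big\|_2\le\big\|\mathbf{r}^{\max\{|T|,\lfloor 8|T|/S\rfloor\}}\big\|_2\le\mu_0\|\mathbf{v}\|_2$, as claimed. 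Since this corollary is essentially a restatement of Theorem~\ref{thm:general_2} at $k=0$, I do not expect a genuine obstacle; the only points demanding care are the two monotonicity facts—monotonicity of $\delta_j$ in $j$ to certify the hypothesis and monotonicity of $\|\mathbf{r}^k\|_2$ in $k$ to upgrade from $|T|$ to $K$ iterations—both of which are standard, so the real content of the result resides entirely in Theorem~\ref{thm:general_2}.
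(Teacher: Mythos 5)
Your proposal is correct and follows essentially the same route as the paper, which obtains Theorem~\ref{cor:cor2} by specializing Theorem~\ref{thm:general_2} to $k=0$, where $T^0=\emptyset$, $\Gamma^0=T$, and the RIC order $\max\{Sk+7|\Gamma^k|,\,Sk+S+|\Gamma^k|\}$ collapses to at most $7K$. Your treatment is in fact slightly more careful than the paper's, since you handle the case $|T|\le K$ (rather than assuming $|\Gamma^0|=K$ exactly) via monotonicity of the RIC and of the residual norm, both of which are facts the paper itself relies on.
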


 \vspace{1mm}
%
From Theorem~\ref{cor:cor2}, we also obtain the exact recovery condition of gOMP in the noise-free scenario. In fact, in the absence of noise, Theorem~\ref{cor:cor2} suggests that $\big\|\mathbf{r}^{\max \left\{K, \left\lfloor \frac{8 K }{S}\right\rfloor \right\}}\big\|_2 = 0$ under $\delta_{7K} \leq \frac{1}{8}$. Therefore, gOMP recovers any $K$-sparse signal accurately within $\max \{K, \lceil \frac{8K}{S} \rceil\}$ iterations under $\delta_{7K} \leq \frac{1}{8}$.

We next show that the $\ell_2$-norm of the recovery error of gOMP is also upper bounded by the product of a constant and $\|\mathbf{v}\|_2$.
 
 \vspace{1mm}
 
\begin{theorem}  [Stability\hspace{.38mm} under\hspace{.38mm}  Measurement\hspace{.38mm}  Perturbations] \label{cor:cor3}
Let $\mathbf{x} \in \mathcal{R}^n$ be any $K$-sparse vector,
$\mathbf{\Phi} \in \mathcal{R}^{m \times n}$ be the measurement
matrix, and $\mathbf{y} = \mathbf{\Phi x} + \mathbf{v}$ be the noisy
measurements where $\mathbf{v}$ is the noise vector. Then under $\delta_{\max \{9, S + 1\}K} \leq \frac{1}{8}$, 
gOMP satisfies
\begin{equation}
  \big\|\hat{\mathbf{x}}^{\max \left\{K, \left\lfloor \frac{8 K }{S}\right\rfloor \right\}} - \mathbf{x}\big \|_2 \leq \mu \|\mathbf{v}\|_2 \label{eq:10ooo}
\end{equation}
and  
\begin{equation}
 \|\hat{\mathbf{x}} - \mathbf{x}\|_2 \leq
  {C} \|\mathbf{v}\|_2, \label{eq:10ooo1}
\end{equation}
%
where $\mu$ and $C$ are constants depending on $\delta_{\max \{9, S + 1\}K}$. 
\end{theorem}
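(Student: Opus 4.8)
The plan is to leverage Theorem~\ref{cor:cor2}, which already controls the residual, and convert that residual bound into the two estimation-error bounds through the RIP. Write $\ell = \max\{K, \lfloor 8K/S\rfloor\}$ for the target iteration count. Since $\max\{9,S+1\}K \geq 9K > 7K$ and the restricted isometry constant is nondecreasing in its order, the hypothesis $\delta_{\max\{9,S+1\}K} \leq \frac{1}{8}$ implies $\delta_{7K} \leq \frac{1}{8}$, so Theorem~\ref{cor:cor2} applies and yields $\|\mathbf{r}^{\ell}\|_2 \leq \mu_0\|\mathbf{v}\|_2$. The first thing I would nail down is the cardinality bookkeeping that explains why the order $\max\{9,S+1\}K$ appears in the hypothesis: the error vector $\hat{\mathbf{x}}^{\ell} - \mathbf{x}$ is supported on $T^{\ell}\cup T$. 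Splitting into the cases $S \leq 8$ and $S > 8$, in the former $\ell = \lfloor 8K/S\rfloor$ so $|T^{\ell}| \leq S\lfloor 8K/S\rfloor \leq 8K$ and $|T^{\ell}\cup T| \leq 9K$, while in the latter $\ell = K$ so $|T^{\ell}| \leq SK$ and $|T^{\ell}\cup T| \leq (S+1)K$; in both regimes $|T^{\ell}\cup T| \leq \max\{9,S+1\}K$.

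To establish~\eqref{eq:10ooo}, I would observe that since $\hat{\mathbf{x}}^{\ell}$ is supported on $T^{\ell}$ and $\mathbf{y} = \mathbf{\Phi x} + \mathbf{v}$,
\begin{equation*}
\mathbf{\Phi}(\hat{\mathbf{x}}^{\ell} - \mathbf{x}) = (\mathbf{y} - \mathbf{r}^{\ell}) - (\mathbf{y} - \mathbf{v}) = \mathbf{v} - \mathbf{r}^{\ell},
\end{equation*}
so that $\|\mathbf{\Phi}(\hat{\mathbf{x}}^{\ell} - \mathbf{x})\|_2 \leq (1+\mu_0)\|\mathbf{v}\|_2$. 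Because $\hat{\mathbf{x}}^{\ell} - \mathbf{x}$ lives on a set of size at most $\max\{9,S+1\}K$, the lower RIP inequality gives $\|\hat{\mathbf{x}}^{\ell} - \mathbf{x}\|_2 \leq (1-\delta_{\max\{9,S+1\}K})^{-1/2}\,\|\mathbf{\Phi}(\hat{\mathbf{x}}^{\ell} - \mathbf{x})\|_2$, and hence~\eqref{eq:10ooo} with $\mu = (1+\mu_0)(1-\delta_{\max\{9,S+1\}K})^{-1/2}$.

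For the output bound~\eqref{eq:10ooo1}, recall that $\hat{T}$ indexes the $K$ largest entries of $\hat{\mathbf{x}}^{\ell}$, so $\hat{\mathbf{x}}^{\ell}_{\hat{T}}$ is a best $K$-term approximation of $\hat{\mathbf{x}}^{\ell}$; since $\mathbf{x}$ is itself $K$-sparse, $\|\hat{\mathbf{x}}^{\ell} - \hat{\mathbf{x}}^{\ell}_{\hat{T}}\|_2 \leq \|\hat{\mathbf{x}}^{\ell} - \mathbf{x}\|_2$, and a triangle inequality then gives $\|\mathbf{x} - \hat{\mathbf{x}}^{\ell}_{\hat{T}}\|_2 \leq 2\mu\|\mathbf{v}\|_2$. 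Next I would use the least-squares optimality defining $\hat{\mathbf{x}}$ on $\hat{T}$: because $\hat{\mathbf{x}}^{\ell}_{\hat{T}}$ is also supported on $\hat{T}$, we have $\|\mathbf{y} - \mathbf{\Phi}\hat{\mathbf{x}}\|_2 \leq \|\mathbf{y} - \mathbf{\Phi}\hat{\mathbf{x}}^{\ell}_{\hat{T}}\|_2 \leq \sqrt{1+\delta_{2K}}\,\|\mathbf{x} - \hat{\mathbf{x}}^{\ell}_{\hat{T}}\|_2 + \|\mathbf{v}\|_2$, where the upper RIP bound is applied to the $2K$-sparse vector $\mathbf{x} - \hat{\mathbf{x}}^{\ell}_{\hat{T}}$. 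Finally, writing $\mathbf{\Phi}(\hat{\mathbf{x}} - \mathbf{x}) = \mathbf{v} - (\mathbf{y} - \mathbf{\Phi}\hat{\mathbf{x}})$ and applying the lower RIP bound to the $2K$-sparse vector $\hat{\mathbf{x}} - \mathbf{x}$ (note $2K \leq \max\{9,S+1\}K$) produces~\eqref{eq:10ooo1} with the explicit constant $C = (2\mu\sqrt{1+\delta_{2K}} + 2)(1-\delta_{2K})^{-1/2}$.

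I expect the conceptual crux to be the cardinality accounting of the first paragraph, since it is exactly what justifies the RIP order $\max\{9,S+1\}K$ in the hypothesis and ties the two regimes of $S$ together; once the support of $\hat{\mathbf{x}}^{\ell} - \mathbf{x}$ is confined to a set of the right size, the remaining steps are routine applications of the RIP combined with the optimality of the least-squares estimation step and of the best $K$-term approximation. A minor point I would verify carefully is that every RIP order invoked (namely $2K$ and $\max\{9,S+1\}K$) is dominated by $\max\{9,S+1\}K$, so that all of the constants $\mu_0$, $\mu$, and $C$ are simultaneously well defined under the single hypothesis $\delta_{\max\{9,S+1\}K} \leq \frac{1}{8}$.
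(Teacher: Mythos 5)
Your proposal is correct and follows essentially the same route as the paper's proof in Appendix~\ref{app:cor2}: the bound \eqref{eq:10ooo} is obtained exactly as in \eqref{eq:tf}--\eqref{eq:61} (Theorem~\ref{cor:cor2}, the reverse triangle inequality, the RIP, and the identical support count $|T \cup T^{\ell}| \leq \max\{9, S+1\}K$ for $\ell = \max\{K, \lfloor 8K/S\rfloor\}$, yielding the same $\mu = (1+\mu_0)(1-\delta_{\max\{9,S+1\}K})^{-1/2}$), while \eqref{eq:10ooo1} uses the same three ingredients as \eqref{eq:zuuuo}--\eqref{eq:79eq}---the best $K$-term approximation plus triangle inequality, the least-squares optimality of $\hat{\mathbf{x}}$ on $\hat{T}$, and RIP estimates on $2K$-sparse vectors---merely chained in the forward direction rather than the paper's backward one. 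As a minor aside, your chain applies the RIP with the correct $(1\pm\delta)$ factors, whereas steps (a) and (e) of \eqref{eq:79eq} have them transposed (a typo, since the paper's final constant \eqref{eq:Cvalue} agrees with the corrected computation), and your $C = \left(2\mu\sqrt{1+\delta_{2K}}+2\right)(1-\delta_{2K})^{-1/2}$ coincides with \eqref{eq:Cvalue} up to replacing $\delta_{2K}$ by the larger $\delta_{\max\{9,S+1\}K}$.
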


\begin{proof}
See Appendix \ref{app:cor2}.
\end{proof}

%

\vspace{1mm}

\begin{remark}[Comparison with previous results]
 \label{rem:1}
 
 
From Theorem~\ref{cor:cor2} and~\ref{cor:cor3}, we observe that gOMP is far more effective than what previous results tell.
Indeed, upper bounds in Theorem~\ref{cor:cor2} and~\ref{cor:cor3} are
absolute constants and independent of the sparsity $K$, while those in previous
works are inversely proportional to $\sqrt{K}$ (e.g., $\delta_{SK} <
\frac{\sqrt{S}}{(2 + \sqrt{2}) \sqrt{K}}$~\cite{liu2012orthogonal},
$\delta_{SK} < \frac{\sqrt{S}}{\sqrt{K} + 3 \sqrt{S}}$
\cite{wang2012Generalized}, and $\delta_{SK} <
\frac{\sqrt{S}}{\sqrt{K} + 2 \sqrt{S}}$
\cite{satpathi2013improving}). Clearly the upper bounds in previous works will vanish when $K$
is large.

\end{remark}

\vspace{1mm}

\begin{remark}[Number of measurements]
\label{rem:2}
It is well known
that a random measurement matrix $\mathbf{\Phi} \in \mathcal{R}^{m
\times n}$, which has independent and identically distributed (\text{i.i.d.}) entries with Gaussian distribution
$\mathcal{N}(0, \frac{1}{m})$, obeys the RIP with $\delta_K \leq
\varepsilon$ with overwhelming probability if
$m = \mathcal{O} \left(\frac{ K \log \frac{n}{K}}{\varepsilon^2}\right)$~\cite{candes2005decoding,baraniuk2008simple}.
When the recovery conditions in~\cite{liu2012orthogonal,wang2012Generalized,liu2012super,huang2011recovery,maleh2011improved,satpathi2013improving,shen2014analysis,dan2014analysis,li2015sufficient}
are used, the number of required measurements is expressed as $m =
\mathcal{O} \left( K^2 \log \frac{n}{K}\right)$. Whereas, our
new conditions require $m =
\mathcal{O} \left( K \log \frac{n}{K}\right)$, which is significantly smaller than the previous result, in particular for large $K$.
\end{remark}

\vspace{1mm}

\begin{remark}[Comparison with information-theoretic results]

\label{rem:3}

It might be worth comparing our result with information-theoretic results in~\cite{rangan2009asymptotic,guo2009single,donoho2009message}. Those results, which are obtained by a single-letter characterization in a large system limit, provides a performance limit of maximum a posteriori (MAP) and minimum mean square error (MMSE) estimation.
For Gaussian random measurements, it is known that the MMSE achieves a scaling of $m = \mathcal{O}(K)$ when the signal-to-noise ratio (SNR) is sufficiently large~\cite{rangan2009asymptotic,guo2009single}. In contrast, the gOMP algorithm requires $m = \mathcal{O} \left( K \log \frac{n}{K}\right)$, which is larger than the MMSE scaling in a factor of $\log \frac{n}{K}$. 
Whether one can remove the logarithm term in the number of measurements from conventional sparse recovery algorithms such as gOMP in non-limiting regime is an interesting open question.

\end{remark}

\vspace{1mm}

\begin{remark}[Recovery error]
\label{rem:c}
The constants $\mu_0$, $\mu$ and $C$ in Theorem~\ref{cor:cor2} and~\ref{cor:cor3} can be estimated from the RIC. For example, when $\delta_{7K} \leq \delta_{\max \{9, S + 1\}K} \leq 0.05$, we have $\mu_0 \leq 49$, $\mu \leq 52$ and $C \leq 110$. 
%
%
%
%
It might be interesting to compare the constant $C$ of gOMP with MMSE results~\cite{rangan2009asymptotic,guo2009single,donoho2009message}. Consider the scenario where $\mathbf{\Phi}$ is a random matrix having \text{i.i.d.} elements of zero mean and $\frac{1}{m}$ variance, $\mathbf{x}$ is a sparse vector with each non-zero element taking the value $\pm 1$ with equal probability, and $\mathbf{v}$ is the noise vector with \text{i.i.d.} Gaussian elements. Consider $n = 10, 000$ and $m = 500$ and suppose $\mathbf{x}$ has sparsity rate $p = 0.001$ (so that the sparsity level $K$ is $10$ on average). Then for an SNR of $0$ dB, the required bound of MMSE is $8.6 \times 10^{-6}$ (per dimension)~\cite{guo2009single}, which amounts to $\|\mathbf{x} - \hat{\mathbf{x}}\|_2 \leq 0.027 \|\mathbf{v}\|_2$.\footnote{SNR = $0$ dB implies that $\frac{\|\mathbf{\Phi x}\|^2_2}{\|\mathbf{v}\|_2^2} = 1$. Since each element in $\mathbf{\Phi}$ has power $\frac{1}{m}$, we have $\mathbb{E} \big[ (\mathbf{\Phi x})_j^2 \big] = \frac{p n}{m} = \frac{1}{50}$, which implies that $\mathbb{E} \big[ {v}^2_j \big] = \frac{1}{50}$ and hence $\mathbb{E} \big[ \|\mathbf{v}\|_2] = \sqrt{10}$. 
}
Clearly, the constant $0.027$ for the MMSE result is much smaller than the constant $C$ obtained from gOMP. In fact, the constant $C$ obtained in Theorem~\ref{cor:cor3} is generally loose, as we will see in the simulations. This is mainly because 1) our analysis is based on the RIP framework so that the analysis is in essence the worst-case-analysis, and 2) many relaxations are used in our analysis to obtain the constant upper bounds. 
  
\end{remark}

\vspace{1mm}

\begin{remark}[Comparison with OMP]

\label{rem:4}
When $S = 1$, gOMP returns to the OMP algorithm and Theorem~\ref{cor:cor3} suggests that OMP performs stable recovery of all $K$-sparse signals in $\max \left\{K, \left\lfloor \frac{8 K }{S}\right\rfloor \right\} = 8K$ iterations under $\delta_{9K} \leq \frac{1}{8}$. In a recent work of Zhang~\cite[Theorem 2.1]{zhang2011sparse}, it has been shown that OMP can achieve stable recovery of $K$-sparse signals in $16.6K$ iterations under $\delta_{17.6K} \leq \frac{1}{8}$. Clearly, our new result indicates that OMP has better (less restrictive) RIP condition and also requires smaller number of iterations.  
\end{remark}

\vspace{1mm}

It is worth mentioning that even though the input signal is not strictly sparse, in many cases, it can be well approximated by a sparse signal. Our result can be readily extended to this scenario.

%

\vspace{1mm}

\begin{corollary}[Recovery of non-sparse signals] \label{cor:nonsparse}
Let $\mathbf{x}_K \in \mathcal{R}^n$ be the vector that keeps $K$ largest elements of the input vector $\mathbf{x}$ and sets all other entries to zero. 
%
%
%
Let $\mathbf{\Phi} \in \mathcal{R}^{m \times n}$ be the measurement
matrix satisfying $\delta_{\max \{18, 2S + 2\}K} \leq \frac{1}{8}$ and $\mathbf{y} = \mathbf{\Phi x} + \mathbf{v}$ be the noisy
measurements where $\mathbf{v}$ is the noise vector. Then, 
gOMP produces an estimate $\hat{\mathbf{x}}$ of $\mathbf{x}$ in $\max \left\{2K, \left\lfloor \frac{16K}{S}
\right\rfloor \right\}$ iterations such that
\begin{eqnarray}
\|\hat{\mathbf{x}} - \mathbf{x}\|_2 \leq {D} \left(\frac{\|\mathbf{x} - \mathbf{x}_K\|_1}{\sqrt K} + \|\mathbf{v}\|_2 \right), 
\end{eqnarray} 
where $D$ is a constant depending on $\delta_{\max \{18, 2S + 2\}K}$.
\end{corollary}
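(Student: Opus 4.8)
The plan is to reduce the non-sparse recovery problem to the sparse case already settled in Theorem~\ref{cor:cor3}, by regarding the best $2K$-term approximation $\mathbf{x}_{2K}$ as the signal to be recovered and folding the tail into the noise. Writing $\mathbf{y} = \mathbf{\Phi x}_{2K} + \mathbf{v}''$ with effective noise $\mathbf{v}'' = \mathbf{\Phi}(\mathbf{x}-\mathbf{x}_{2K}) + \mathbf{v}$, the vector $\mathbf{x}_{2K}$ is exactly $2K$-sparse. Since the hypothesis $\delta_{\max\{18,2S+2\}K} = \delta_{\max\{9,S+1\}\cdot 2K} \leq \frac{1}{8}$ is precisely the condition of Theorem~\ref{cor:cor3} with sparsity level $2K$, and $\max\{2K,\lfloor 16K/S\rfloor\} = \max\{2K,\lfloor 8\cdot 2K/S\rfloor\}$ is the corresponding iteration count, I would invoke that theorem with $K$ replaced by $2K$ to obtain $\|\hat{\mathbf{x}} - \mathbf{x}_{2K}\|_2 \leq C\|\mathbf{v}''\|_2$ for the estimate $\hat{\mathbf{x}}$ produced after $\max\{2K,\lfloor 16K/S\rfloor\}$ iterations.

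From here a triangle inequality gives $\|\hat{\mathbf{x}}-\mathbf{x}\|_2 \le \|\hat{\mathbf{x}}-\mathbf{x}_{2K}\|_2 + \|\mathbf{x}-\mathbf{x}_{2K}\|_2 \le C\|\mathbf{v}''\|_2 + \|\mathbf{x}-\mathbf{x}_{2K}\|_2$, so it remains to control the effective noise together with the residual energy. I would bound $\|\mathbf{v}''\|_2 \le \|\mathbf{\Phi}(\mathbf{x}-\mathbf{x}_{2K})\|_2 + \|\mathbf{v}\|_2$ and then invoke the standard RIP tail estimate (the analogue of the corresponding proposition in \cite{needell2009cosamp}), namely $\|\mathbf{\Phi}(\mathbf{x}-\mathbf{x}_{2K})\|_2 \le \sqrt{1+\delta_{2K}}\big(\|\mathbf{x}-\mathbf{x}_{2K}\|_2 + \tfrac{1}{\sqrt{2K}}\|\mathbf{x}-\mathbf{x}_{2K}\|_1\big)$, which follows from \eqref{eq:rip} applied block-wise to the tail.

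The final and conceptually key step is to express everything through the single quantity $\tfrac{1}{\sqrt K}\|\mathbf{x}-\mathbf{x}_K\|_1$. This is exactly where the doubling to $2K$ pays off: a Stechkin-type estimate gives $\|\mathbf{x}-\mathbf{x}_{2K}\|_2 \le \tfrac{1}{\sqrt K}\|\mathbf{x}-\mathbf{x}_K\|_1$, because every tail coordinate ranked beyond the $2K$-th is dominated by the average of the $(K{+}1)$-th through $2K$-th largest magnitudes; combined with the trivial monotonicity $\|\mathbf{x}-\mathbf{x}_{2K}\|_1 \le \|\mathbf{x}-\mathbf{x}_K\|_1$, this converts both the surviving $\ell_2$ residual and the $\ell_1$ tail into multiples of $\tfrac{1}{\sqrt K}\|\mathbf{x}-\mathbf{x}_K\|_1$. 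Collecting $C$, $\sqrt{1+\delta_{2K}}$ and the numerical factors into a single constant $D$ depending only on $\delta_{\max\{18,2S+2\}K}$ then yields the claimed bound.

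I expect the main obstacle to be precisely this tail bound on $\|\mathbf{x}-\mathbf{x}_{2K}\|_2$. Had I instead tried to recover the best $K$-term approximation directly, i.e.\ applied Theorem~\ref{cor:cor3} with sparsity $K$, the term $\|\mathbf{x}-\mathbf{x}_K\|_2$ left over in the effective noise could not be absorbed into $\tfrac{1}{\sqrt K}\|\mathbf{x}-\mathbf{x}_K\|_1$ (the two can be comparable but the needed inequality runs the wrong way, as a signal with $K+1$ equal entries shows). Recovering the best $2K$-term approximation is what makes the $\ell_2$ tail controllable by the $\ell_1$ tail, and this is the reason the RIP order and the iteration count in the hypothesis are exactly twice those of Theorem~\ref{cor:cor3}.
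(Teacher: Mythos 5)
Your proof is correct, and it is the argument that the corollary's parameters actually encode. The paper itself omits the proof and only sketches a reduction, but its sketch truncates at level $K$: it partitions $\mathbf{y} = \mathbf{\Phi}\mathbf{x}_K + \big(\mathbf{\Phi}(\mathbf{x}-\mathbf{x}_K)+\mathbf{v}\big)$ and says to apply Theorem~\ref{cor:cor3}. Taken literally, that sketch does not deliver the stated bound: as you observe, the effective noise (and the final triangle inequality) would then involve $\|\mathbf{x}-\mathbf{x}_K\|_2$, which cannot be dominated by $\frac{1}{\sqrt{K}}\|\mathbf{x}-\mathbf{x}_K\|_1$ (your example of $K+1$ equal entries shows the inequality runs the wrong way), and it would leave unexplained why the hypothesis doubles the RIP order to $\max\{18,2S+2\}K=\max\{9,S+1\}\cdot 2K$ and the iteration count to $\max\left\{2K,\left\lfloor \frac{16K}{S}\right\rfloor\right\}$. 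Your version—invoke Theorem~\ref{cor:cor3} at sparsity level $2K$ for $\mathbf{x}_{2K}$, control the effective noise by the RIP tail estimate $\|\mathbf{\Phi}\mathbf{z}\|_2 \le \sqrt{1+\delta_{2K}}\left(\|\mathbf{z}\|_2 + \tfrac{1}{\sqrt{2K}}\|\mathbf{z}\|_1\right)$, and absorb both the $\ell_2$ remainder and the $\ell_1$ tail via the Stechkin-type bound $\|\mathbf{x}-\mathbf{x}_{2K}\|_2 \le \tfrac{1}{\sqrt{K}}\|\mathbf{x}-\mathbf{x}_K\|_1$—is precisely the device used in the CoSaMP-style references the paper points to, and it is the step that makes the corollary true as stated. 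So the overall strategy (fold the tail into the noise, reduce to the sparse case) coincides with the paper's intent, but your choice of truncating at $2K$ rather than $K$ is the substantive ingredient the paper's sketch glosses over; without it the argument genuinely fails.
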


\vspace{1mm}
%
%
Since Corollary \ref{cor:nonsparse} is a straightforward extension of Theorem \ref{cor:cor3}, we omit the proof for brevity (see~\cite{needell2010signal,blumensath2009iterative,foucart2011hard,needell2009cosamp,dai2009subspace}).
Note that the key idea is to partition the noisy measurements of a non-sparse signal into two parts and then apply Theorem~\ref{cor:cor3}. The two parts consist of 1) measurements associated with dominant elements of the signal ($\mathbf{y}_1 = \mathbf{\Phi x}_K$) and 2) measurements associated with insignificant elements and the noise vector ($\mathbf{y}_2 = \mathbf{\Phi} (\mathbf{x} - \mathbf{x}_K) + \mathbf{v}$).  
That is, 
\begin{equation}
\mathbf{y} = \mathbf{y}_1 + \mathbf{y}_2 = \mathbf{\Phi} \mathbf{x}_K + \mathbf{\Phi} (\mathbf{x} - \mathbf{x}_K) + \mathbf{v}.
\end{equation} 

\subsection{Empirical Results} \label{sec:sim}
We evaluate the recovery performance of the gOMP algorithms through numerical experiments. Our simulations are focused on the noisy scenario (readers are referred to~\cite{maleh2011improved,huang2011recovery,wang2012Generalized} for simulation results in the noise-free scenario).  
In our simulations, we consider random matrices $\mathbf{\Phi}$ of size $100 \times 200$ whose entries are drawn \text{i.i.d.} from Gaussian distribution $\mathcal{N} (0, \frac{1}{m})$. We generate $K$-sparse signals $\mathbf{x}$ whose components are \text{i.i.d.} and follow a Gaussian-Bernoulli distribution
\begin{equation}
 {x}_j \sim   \begin{cases}
0 & \text{with probability}~1 - p, \\
\mathcal{N} (0, 1) &\text{with probability}~p,
\end{cases} 
\end{equation}
where $p$ is the sparsity rate that represents the average fraction of non-zero components in $\mathbf{x}$.
We employ the mean square error (MSE) as a metric to evaluate the recovery performance in the noisy scenario. The MSE is defined as
\begin{equation}
\text{MSE}= \frac{1}{n} \sum_{i = 1}^n ({x}_i - \hat{{x}}_i)^2,
\end{equation}
where $\hat{{x}}_i$ is the estimate of $x_i$. 
In our simulation, the following recovery algorithms are considered:
\begin{enumerate}
\item OMP and gOMP ($S = 3, 5$).

\item CoSaMP: We set the maximal iteration number to $50$ to avoid repeated iterations (\url{http://www.cmc.edu/pages/faculty/DNeedell})..

\item StOMP: We use false alarm rate control strategy as it works better than false discovery rate control strategy (\url{http://sparselab.stanford.edu/}).


\item 
BPDN (\url{http://cvxr.com/cvx/}).

\item Generalized approximate message passing (GAMP)~\cite{donoho2009message,rangan2011generalized,vila2011expectation}: (\url{http://gampmatlab.wikia.com/}).

\item Linear MMSE estimator.
\end{enumerate}
In obtaining the performance result for each simulation point of the algorithm, we perform $2,000$ independent trials.
  

\begin{figure}[t] 
\centering 
\subfigure[$p = 0.05$.] 
{\hspace{-2mm} 
\includegraphics[width = 90 mm] {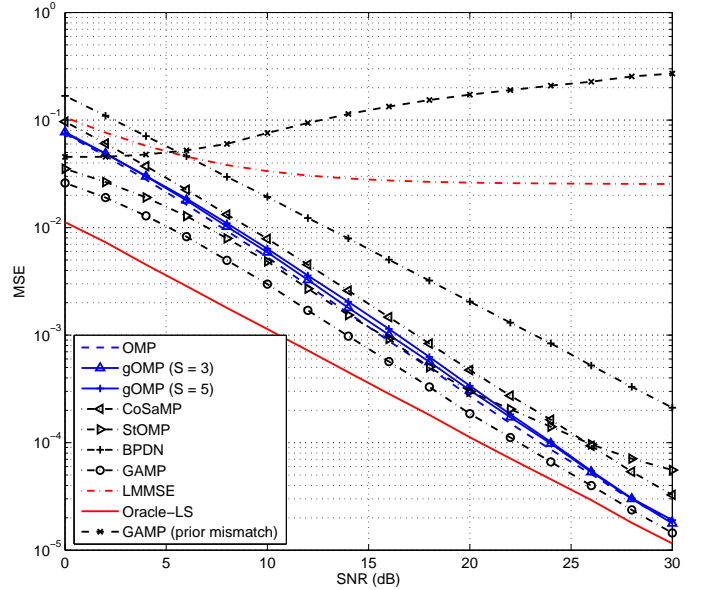}
\label{fig:subfig050}} 
\subfigure[$p = 0.1$.] 
{\hspace{-2mm} 
\includegraphics[width = 90 mm] {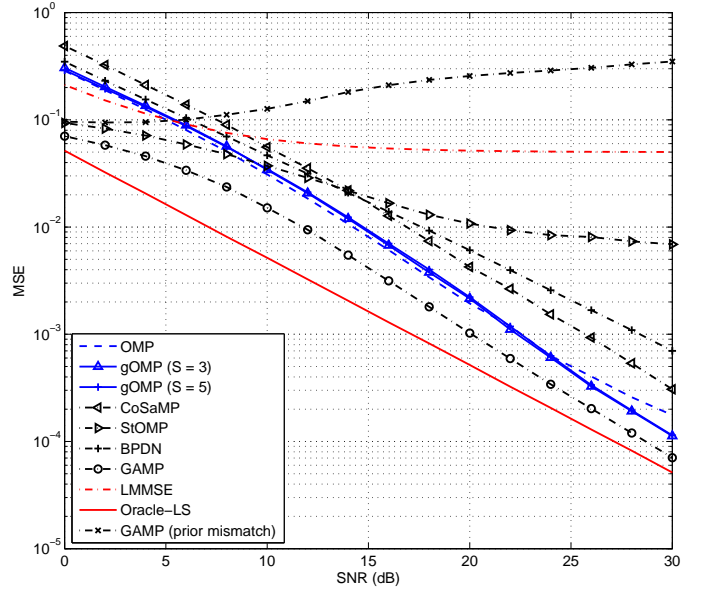}
\label{fig:subfig100}} 
\caption{MSE performance of recovery algorithms as a function of SNR.}
\label{fig:mse1}
\end{figure}

 
In Fig.~\ref{fig:mse1}, we plot the MSE performance for each recovery method as a function of signal-to-noise ratio (SNR), where the SNR (in dB) is defined as 
\begin{equation}
\text{SNR} = 10 \log_{10} \frac{\|\mathbf{\Phi x}\|_2^2}{\|\mathbf{v}\|_2^2}.
\end{equation}
In this case, the system model is expressed as $\mathbf{y} = \mathbf{\Phi x} + \mathbf{v}$ where $\mathbf{v}$ is the noise vector whose elements are generated from Gaussian distribution $\mathcal{N}(0, \frac{p n}{m} 10^{- \frac{\textbf{SNR}}{10}})$.\footnote{Since the components of $\mathbf{\Phi}$ have power $\frac{1}{m}$ and the signal $\mathbf{x}$ has sparsity rate $p$, $\mathbb{E}|(\mathbf{\Phi x})_i|^2 = \frac{p n}{m}$. From the definition of SNR, we have $\mathbb{E}|v_i|^2 = \mathbb{E}|(\mathbf{\Phi x})_i|^2 \cdot 10^{- \frac{\textbf{SNR}}{10}} = \frac{p n}{m} 10^{- \frac{\textbf{SNR}}{10}}$.} The benchmark performance of Oracle least squares estimator  (Oracle-LS), the best possible estimation having prior knowledge on the support of input signals, is plotted as well. In general, we observe that for all methods, the MSE performance improves with the SNR. While GAMP has the lowest MSE when the prior knowledge on the signal and noise distribution is available, it does not perform well when the prior information is incorrect.\footnote{In order to test the mismatch scenario, we use Bernoulli distribution (${x}_j \sim \mathcal{B}(1, p)$).}
For the whole SNR region under test, the MSE performance of gOMP is comparable to OMP and also outperforms CoSaMP and BPDN. 
An interesting point is that the actual recovery error of gOMP is much smaller than that provided in Theorem~\ref{cor:cor3}. For example, when SNR = $10$ dB, $p = 0.05$, and ${v}_j \sim \mathcal{N}(0, \frac{p n}{m} 10^{- \frac{\textbf{SNR}}{10}})$, we have $\mathbb{E} \|\mathbf{v}\|_2 = ({p n} 10^{- \frac{\textbf{SNR}}{10}})^{1/2} = 1$. Using this together with~\eqref{eq:Cvalue}, the upper bound for $\|\mathbf{x} - \hat{\mathbf{x}}\|_2$ in Theorem~\ref{cor:cor3} is around $63$.
In contrast, when SNR = $10$ dB, the $\ell_2$-norm of the actual recovery error of gOMP is $\|\mathbf{x} - \hat{\mathbf{x}}\|_2 = (n \cdot \text{MSE})^{1/2} \approx 1$ (Fig.~\ref{fig:subfig050}), which is much smaller than the upper bound indicated in Theorem~\ref{cor:cor3}.

%
%
%

\begin{figure}[t] 
\centering
\hspace{-2mm}{\includegraphics[width = 90 mm]{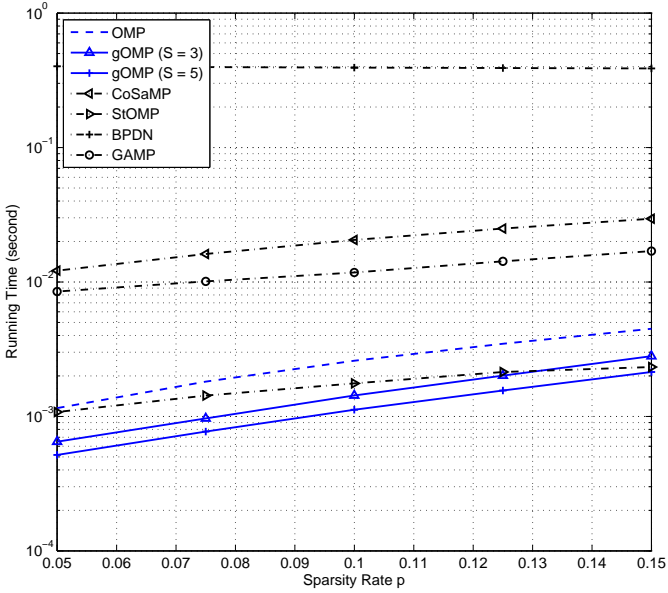}} 
\caption{Running time as a function of sparsity rate $p$.} \label{fig:time} 
\end{figure}


Fig.~\ref{fig:time} displays the running time of each recovery method as a function of the sparsity rate $p$. The running time is measured using the MATLAB program on a personal computer with Intel Core i7 processor and Microsoft Windows $7$ environment. Overall, we observe that the running time of OMP, gOMP, and StOMP is smaller than that of CoSaMP, GAMP, and BPDN. In particular, the running time of BPDN is more than one order of magnitude higher than the rest of algorithms require. This is because the complexity of BPDN is a quadratic function of the number of measurements ($\mathcal{O}(m^2 n^{3/2})$)~\cite{nesterov1994interior}, while that of the gOMP algorithm is $\mathcal{O}(Kmn)$~\cite{wang2012Generalized}.
Since  gOMP can chooses more than one support index at a time, we also observe that gOMP runs faster than the OMP algorithm.

\section{Proof of Theorem~\ref{thm:general_2}} \label{sec:proof}

%
%
\subsection{Preliminaries} \label{sec:preliminaries} Before we
proceed to the proof of Theorem~\ref{thm:general_2}, we present definitions used in our analysis.
Recall that $\Gamma^k = T \backslash T^k$ is the set of
remaining support elements after $k$ iterations of gOMP. In what
follows, we assume without loss of generality that $\Gamma^{k} =
\left\{1, \cdots,|\Gamma^{k}|\right\}$. Then it is clear that $0 \leq
|\Gamma^{k}| \leq K$. For example, if $k = 0$, then $T^{k} =
\emptyset$ and $|\Gamma^{k}| = |T| = K$. Whereas if $T^{k} \supseteq
T$, then $\Gamma^{k} = \emptyset$ and $|\Gamma^{k}| = 0$. Also, for
notational convenience we assume that $\{x_i\}$ is arranged in
descending order of their magnitudes, i.e., $ |x_1| \geq |x_2| \geq \cdots
\geq |x_{|\Gamma^{k}|}|$. Now, we define the subset
${\Gamma}^k_{\tau}$ of ${\Gamma^{k}}$ as (see Fig.
\ref{fig:subfig21}):
\begin{equation}
  {\Gamma}^k_{\tau} =
  \begin{cases}
  \emptyset &\tau = 0, \\
  \left\{1, \cdots, 2^{\tau - 1} S \right\} &\tau = 1, \cdots, \max \left\{0, \left\lceil \log_2 \frac{{|\Gamma^{k}|}}{S} \right\rceil  \right\}, \\
  \Gamma^{k} &\tau = \max \left\{0, \left\lceil \log_2 \frac{|\Gamma^{k}|}{S} \right\rceil \right\} + 1.
\end{cases} \label{eq:jjjjffff}
\end{equation}
Note that the last set ${\Gamma}^k_{\max \left\{0, \left\lceil \log_2 \frac{|\Gamma^{k}|}{S} \right\rceil \right\} + 1}$ ($= \Gamma^k$) does not necessarily have
$2^{\max \left\{0, \left\lceil \log_2 \frac{|\Gamma^{k}|}{S} \right\rceil \right\}} S$ elements.

\begin{figure}[t] \label{fig:set}
\centering \subfigure[Set diagram of $T$, $T^{k}$, and
$\Gamma^{k}$.] {\includegraphics[scale = 1] {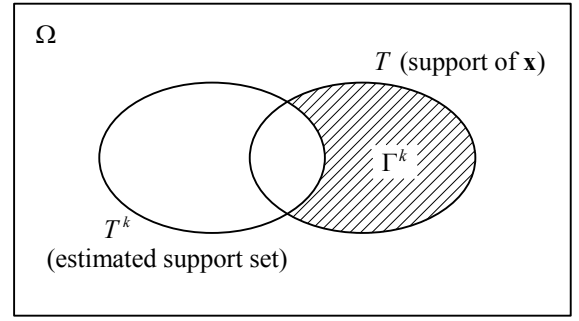}
\label{fig:subfig11}} \subfigure[Illustration of indices in
$\Gamma^{k}_\tau$.] {\includegraphics[scale = 1] {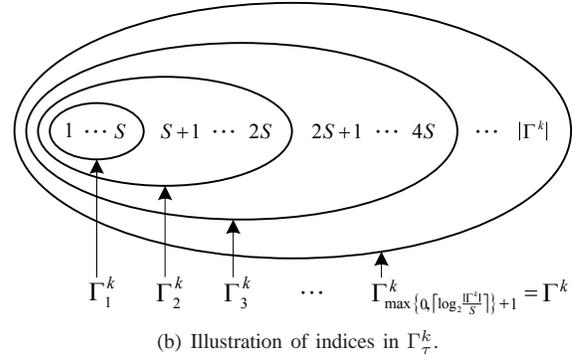}
\label{fig:subfig21}} \caption{Illustration of sets $T$, $T^{k}$,
and $\Gamma^{k}$.}
\end{figure}

For given set $\Gamma^{k}$ and constant $\sigma \geq 2$, let $L
\in \left\{1, 2, \cdots, \max \left\{0, \left\lceil \log_2 \frac{|\Gamma^{k}|}{S} \right\rceil \right\} + 1 \right\}$ be a positive integer satisfying\footnote{We note that $L$ is a
function of $k$.} 
\begin{subequations}
\begin{align}
 \|\mathbf{x}_{{\Gamma^{k}} \backslash {\Gamma}^k_{0}}\|_2^2 &< \sigma
 \|\mathbf{x}_{{\Gamma^{k}} \backslash {\Gamma}^k_{1}}\|_2^2,
 \label{eq:mu10}\\
 \|\mathbf{x}_{{\Gamma^{k}} \backslash {\Gamma}^k_{1}} \|_2^2 &< \sigma
 \|\mathbf{x}_{{\Gamma^{k}} \backslash {\Gamma}^k_{2}}\|_2^2,
 \label{eq:mu2} \\
 &~~\vdots \nonumber \\
 \|\mathbf{x}_{{\Gamma^{k}} \backslash {\Gamma}^k_{{L}- 2}}\|_2^2 &< \sigma
 \|\mathbf{x}_{{\Gamma^{k}} \backslash {\Gamma}^k_{{L}- 1}}\|_2^2,
 \label{eq:mu1} \\
 \|\mathbf{x}_{{\Gamma^{k}} \backslash {\Gamma}^k_{{L}- 1}}\|_2^2 &\geq \sigma
 \|\mathbf{x}_{{\Gamma^{k}} \backslash {\Gamma}^k_{L}}\|_2^2.
 \label{eq:mu4}
\end{align}
\end{subequations}
If \eqref{eq:mu4} holds true for all $L \geq 1$, then we ignore \eqref{eq:mu10}--\eqref{eq:mu1} and simply take $L = 1$. Note that $L$ always exists because $
\big\|\mathbf{x}_{\Gamma^{k} \backslash \Gamma^k_{\max \left\{0, \lceil \log_2 {{|\Gamma^{k}|}/{S}} \rceil \right\} + 1}}\big\|_2^2 = 0$ so that \eqref{eq:mu4}   holds true at least for $L = {\max \left\{0, \lceil \log_2 {{|\Gamma^{k}|}/{S}} \rceil \right\} + 1}$.
%
From \eqref{eq:mu10}--\eqref{eq:mu4}, we have 
\begin{equation}  \label{eq:next22}
  \| \mathbf{x}_{{\Gamma^{k}} \backslash {\Gamma}^k_\tau} \|_2^2 \leq \sigma^{{L}- 1 - \tau} \|
\mathbf{x}_{{\Gamma^{k}} \backslash {\Gamma}^k_{{L}- 1}}
  \|_2^2,~~\tau = 0, 1, \cdots, {L}.
\end{equation}
Moreover, if $L \geq 2$, we have a lower bound for $|{\Gamma}^k|$ as (see
Appendix~\ref{app:relations}):
\begin{equation} |\Gamma^{k}|
> \left(\frac{2\sigma - 1}{2\sigma  - 2}\right) 2^{{L}- 2} S.
\label{eq:next2}
\end{equation}
Equations \eqref{eq:next22} and \eqref{eq:next2} will be used in the proof of Theorem~\ref{thm:general_2} and  we will fix $\sigma = \frac{1}{2}\exp \left(\frac{14}{9}\right)$ in the proof. 

We provide two propositions useful in the proof of Theorem~\ref{thm:general_2}. The first one offers an upper bound for $\|\mathbf{r}^k\|_2^2$ and a lower bound for $\|\mathbf{r}^l \|_2^2 - \| \mathbf{r}^{l + 1} \|_2^2$ ($l \geq k$).

\vspace{1mm}

\begin{proposition} \label{prop:rk}
For given $\Gamma^{k}$ and any integer $l \geq k$, the residual of gOMP
satisfies
\begin{eqnarray} \label{eq:residual3}
&& \| \mathbf{r}^{k} \|_2^2 \leq \|\mathbf{\Phi}_{\Gamma^k}\mathbf{x}_{\Gamma^{k}} +
\mathbf{v}\|_2^2,  \\
&& \|\mathbf{r}^l\|_2^2 - \|\mathbf{r}^{l + 1}\|_2^2 \geq 
\frac{1 - \delta_{|{\Gamma}^k_{\tau} \cup T^l |}} {(1 + \delta_{S})
\left\lceil \frac{|{\Gamma}^k_\tau |}{S} \right\rceil} \nonumber \\
 &&\hspace{10mm} \times \left( \|\mathbf{r}^l \|_2^2 -
\|\mathbf{\Phi}_{{\Gamma^{k}} \backslash {\Gamma}^k_\tau}
\mathbf{x}_{{\Gamma^{k}} \backslash {\Gamma}^k_\tau} + \mathbf{v}
\|_2^2 \right),  \label{eq:residual31}
\end{eqnarray}
where $\tau = 1, 2, \cdots, \max \left\{0, \left\lceil \log_2 {\frac{|\Gamma^{k}|}{S}}
\right\rceil \right\} + 1$.
 
\end{proposition}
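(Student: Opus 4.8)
The plan is to establish the two inequalities separately, treating \eqref{eq:residual3} as an immediate consequence of least-squares optimality and \eqref{eq:residual31} as the combination of three ingredients. For the first bound, I would use that $\mathbf{r}^k = \mathcal{P}_{T^k}^\bot \mathbf{y}$ is by definition the minimizer of $\|\mathbf{y}-\mathbf{\Phi}\mathbf{u}\|_2$ over all $\mathbf{u}$ with $supp(\mathbf{u}) = T^k$. Evaluating this objective at the particular feasible vector that agrees with $\mathbf{x}$ on $T \cap T^k$ and vanishes elsewhere yields the residual $\mathbf{\Phi}_{\Gamma^k}\mathbf{x}_{\Gamma^k} + \mathbf{v}$, since $\mathbf{y} = \mathbf{\Phi}_T\mathbf{x}_T + \mathbf{v}$ and $\Gamma^k = T\setminus T^k$. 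Optimality then gives $\|\mathbf{r}^k\|_2 \le \|\mathbf{\Phi}_{\Gamma^k}\mathbf{x}_{\Gamma^k}+\mathbf{v}\|_2$, and squaring proves \eqref{eq:residual3}.

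For \eqref{eq:residual31}, the first ingredient is a per-iteration decrease bound. Since $T^l \subseteq T^{l+1}$, the residuals are nested projections and $\|\mathbf{r}^l\|_2^2 - \|\mathbf{r}^{l+1}\|_2^2 = \|\mathcal{P}_{T^{l+1}}\mathbf{r}^l\|_2^2$. Lower-bounding this projection energy by the energy captured along the single direction $\mathbf{\Phi}_{\Lambda^{l+1}}(\mathbf{\Phi}'\mathbf{r}^l)_{\Lambda^{l+1}}$ and invoking the RIP bound $\|\mathbf{\Phi}_{\Lambda^{l+1}}\mathbf{c}\|_2^2 \le (1+\delta_S)\|\mathbf{c}\|_2^2$ gives $\|\mathbf{r}^l\|_2^2 - \|\mathbf{r}^{l+1}\|_2^2 \ge \|(\mathbf{\Phi}'\mathbf{r}^l)_{\Lambda^{l+1}}\|_2^2/(1+\delta_S)$. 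The second ingredient exploits the identification rule: maximizing $\|(\mathbf{\Phi}'\mathbf{r}^{l})_{\Lambda}\|_1$ over $|\Lambda| = S$ is the same as selecting the $S$ largest correlation magnitudes, which also maximizes the squared $\ell_2$ mass over size-$S$ sets. An averaging-of-squares argument, namely that the mean of the top-$S$ squared correlations is at least the mean of the squared correlations over $\Gamma^k_\tau$, then yields $\|(\mathbf{\Phi}'\mathbf{r}^l)_{\Lambda^{l+1}}\|_2^2 \ge \tfrac{S}{|\Gamma^k_\tau|}\|(\mathbf{\Phi}'\mathbf{r}^l)_{\Gamma^k_\tau}\|_2^2 \ge \tfrac{1}{\lceil |\Gamma^k_\tau|/S\rceil}\|(\mathbf{\Phi}'\mathbf{r}^l)_{\Gamma^k_\tau}\|_2^2$; entries of $\Gamma^k_\tau$ already lying in $T^l$ contribute zero correlation and cause no trouble.

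The third ingredient links the target correlation to the residual gap via the RIP. I would set $U = T^l \cup \Gamma^k_\tau$ and observe that $\mathbf{r}^* = \mathbf{\Phi}_{\Gamma^k\setminus\Gamma^k_\tau}\mathbf{x}_{\Gamma^k\setminus\Gamma^k_\tau}+\mathbf{v}$ is exactly the residual of the estimate supported on $(T\cap T^k)\cup\Gamma^k_\tau \subseteq U$ that agrees with $\mathbf{x}$ there; comparing with the least-squares residual $\mathcal{P}_U^\bot\mathbf{y}$ on $U$ gives $\|\mathbf{r}^l\|_2^2 - \|\mathbf{r}^*\|_2^2 \le \|\mathcal{P}_U\mathbf{r}^l\|_2^2$. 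Because $\mathbf{r}^l \perp span(\mathbf{\Phi}_{T^l})$, the vector $\mathbf{\Phi}_U'\mathbf{r}^l$ is supported on $U\setminus T^l \subseteq \Gamma^k_\tau$, so $\|(\mathbf{\Phi}'\mathbf{r}^l)_{\Gamma^k_\tau}\|_2 = \|\mathbf{\Phi}_U'\mathbf{r}^l\|_2$; writing $\mathcal{P}_U\mathbf{r}^l = \mathbf{\Phi}_U\mathbf{a}$ and using the Gram eigenvalue bound $\|\mathbf{\Phi}_U'\mathbf{\Phi}_U\mathbf{a}\|_2^2 \ge (1-\delta_{|U|})\,\mathbf{a}'\mathbf{\Phi}_U'\mathbf{\Phi}_U\mathbf{a}$ (valid since $\mathbf{\Phi}_U'\mathbf{\Phi}_U$ has eigenvalues in $[1-\delta_{|U|},1+\delta_{|U|}]$) gives $\|(\mathbf{\Phi}'\mathbf{r}^l)_{\Gamma^k_\tau}\|_2^2 \ge (1-\delta_{|U|})(\|\mathbf{r}^l\|_2^2 - \|\mathbf{r}^*\|_2^2)$ with $|U| = |\Gamma^k_\tau\cup T^l|$. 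Chaining the three ingredients produces \eqref{eq:residual31}.

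The main obstacle I anticipate is the second ingredient: matching the first-power factor $1/\lceil |\Gamma^k_\tau|/S\rceil$ demanded by the statement, since a naive $\ell_1$-to-$\ell_2$ conversion of the greedy rule loses an extra factor of order $S$ or $\lceil |\Gamma^k_\tau|/S\rceil$ and only yields a much weaker bound. The resolution is to recognize that the $\ell_1$-optimal size-$S$ set is simultaneously $\ell_2^2$-optimal, so one can argue directly on squared magnitudes and compare averages of the top-$S$ entries against the average over all of $\Gamma^k_\tau$. A secondary care point is correctly identifying $\mathbf{r}^*$ as the residual of a genuinely feasible estimate on $U$, which is what makes least-squares optimality applicable in the third ingredient.
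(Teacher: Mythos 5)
Your proposal is correct and reaches exactly the constant $\frac{1-\delta_{|\Gamma^k_\tau\cup T^l|}}{(1+\delta_S)\lceil|\Gamma^k_\tau|/S\rceil}$ of \eqref{eq:residual31}, but the route is genuinely different from the paper's. The proof of \eqref{eq:residual3} and the first ingredient coincide with the paper (the paper gets $\|\mathbf{r}^l\|_2^2-\|\mathbf{r}^{l+1}\|_2^2\geq\frac{1}{1+\delta_S}\|\mathbf{\Phi}'_{\Lambda^{l+1}}\mathbf{r}^l\|_2^2$ via the smallest singular value of $\mathbf{\Phi}^\dag_{\Lambda^{l+1}}$, you via projection onto the single direction $\mathbf{\Phi}_{\Lambda^{l+1}}\mathbf{\Phi}'_{\Lambda^{l+1}}\mathbf{r}^l$; these are equivalent). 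The divergence is in the remaining step. The paper (Appendix~\ref{app:prop1}) takes the auxiliary vector $\mathbf{z}$ supported on $(T\cap T^k)\cup\Gamma^k_\tau$, bounds the inner product $\langle\mathbf{\Phi}'\mathbf{r}^l,\mathbf{z}\rangle$ from above by the combinatorial Lemma~\ref{lem:good} and from below by the polarization identity, the AM--GM inequality $\frac{1}{2}(a+b)\geq\sqrt{ab}$, and the RIP, then compares the two bounds. You use the same auxiliary vector but in a projection argument: least-squares optimality over the enlarged support $U=T^l\cup\Gamma^k_\tau$ gives $\|\mathcal{P}_U\mathbf{r}^l\|_2^2\geq\|\mathbf{r}^l\|_2^2-\|\mathbf{\Phi}_{\Gamma^k\setminus\Gamma^k_\tau}\mathbf{x}_{\Gamma^k\setminus\Gamma^k_\tau}+\mathbf{v}\|_2^2$, and the Gram eigenvalue bound $\|\mathbf{\Phi}_U'\mathbf{\Phi}_U\mathbf{a}\|_2^2\geq(1-\delta_{|U|})\,\mathbf{a}'\mathbf{\Phi}_U'\mathbf{\Phi}_U\mathbf{a}$ converts this into a lower bound on $\|(\mathbf{\Phi}'\mathbf{r}^l)_{\Gamma^k_\tau}\|_2^2$, with the ceiling factor supplied by an averaging argument on squared correlations rather than by Lemma~\ref{lem:good}. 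Your version replaces the polarization/AM--GM manipulation with purely geometric least-squares comparisons and localizes all RIP usage in one eigenvalue step; the paper's inner-product formulation avoids having to reason about $\mathcal{P}_U$ and about where $\mathbf{\Phi}_U'\mathbf{r}^l$ is supported.

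One small patch is needed in your second ingredient. The intermediate claim $\|(\mathbf{\Phi}'\mathbf{r}^l)_{\Lambda^{l+1}}\|_2^2\geq\frac{S}{|\Gamma^k_\tau|}\|(\mathbf{\Phi}'\mathbf{r}^l)_{\Gamma^k_\tau}\|_2^2$ (mean of the top-$S$ squares at least the mean over $\Gamma^k_\tau$) is valid only when $|\Gamma^k_\tau|\geq S$; by \eqref{eq:jjjjffff} the last subset equals $\Gamma^k$ and can have fewer than $S$ elements, in which case $\Gamma^k_\tau$ could consist of the very largest correlations and the mean comparison fails. In that case, however, $\lceil|\Gamma^k_\tau|/S\rceil=1$ and the bound you actually need, $\|(\mathbf{\Phi}'\mathbf{r}^l)_{\Lambda^{l+1}}\|_2^2\geq\|(\mathbf{\Phi}'\mathbf{r}^l)_{\Gamma^k_\tau}\|_2^2$, holds trivially because the top-$S$ squared sum dominates the squared sum over any set of at most $S$ indices. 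A uniform way to handle both cases is to partition $\Gamma^k_\tau$ into $\lceil|\Gamma^k_\tau|/S\rceil$ blocks of size at most $S$ and bound each block's squared mass by $\|(\mathbf{\Phi}'\mathbf{r}^l)_{\Lambda^{l+1}}\|_2^2$. This is a wording fix, not a gap in the approach.
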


\begin{proof}
See Appendix \ref{app:prop1}.   
\end{proof}

\vspace{1mm}

The second proposition is essentially an extension of \eqref{eq:residual31}. It characterizes the relationship between residuals
of gOMP in different number of iterations.

\vspace{1mm}

\begin{proposition} \label{prop:residual}
For any integer $l \geq k$, $\Delta l > 0$, and $\tau \in  \{1,
\cdots, \max \left\{0, \left\lceil \log_2 \frac{|\Gamma^{k}|}{S} \right\rceil \right\} + 1\}$, the
residual $\mathbf{r}^{l + \Delta l}$ of gOMP satisfies
\begin{eqnarray}
\lefteqn{\| \mathbf{r}^{l + \Delta l} \|_2^2 -
\|\mathbf{\Phi}_{{\Gamma^{k}} \backslash {\Gamma}^k_\tau}
\mathbf{x}_{{\Gamma^{k}} \backslash {\Gamma}^k_\tau} +
\mathbf{v}\|_2^2} \nonumber \\ &&\leq C_{\tau,l, \Delta l} \left(\|
\mathbf{r}^l \|_2^2 - \|\mathbf{\Phi}_{{\Gamma^{k}} \backslash
{\Gamma}^k_\tau} \mathbf{x}_{{\Gamma^{k}} \backslash
{\Gamma}^k_\tau} + \mathbf{v}\|_2^2\right), 
  \label{eq:good1}
\end{eqnarray}
where
\begin{equation}
C_{\tau,l, \Delta l} = \exp \left( - \frac{\Delta l (1 - \delta_{|
{\Gamma}^k_\tau \cup T^{l + \Delta l - 1}|})} {\left\lceil
\frac{|{\Gamma}^k_\tau |}{S} \right\rceil (1 + \delta_{S})} \right).
\label{eq:constantc}
\end{equation}
\end{proposition}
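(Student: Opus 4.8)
The plan is to obtain \eqref{eq:good1} by iterating the one-step inequality \eqref{eq:residual31} of Proposition~\ref{prop:rk} exactly $\Delta l$ times. To streamline the bookkeeping, write $b := \|\mathbf{\Phi}_{{\Gamma^{k}} \backslash {\Gamma}^k_\tau}\mathbf{x}_{{\Gamma^{k}} \backslash {\Gamma}^k_\tau} + \mathbf{v}\|_2^2$ for the fixed target level and set $a_i := \|\mathbf{r}^{l+i}\|_2^2 - b$ for the excess residual after $i$ further iterations. Applying \eqref{eq:residual31} with $l$ replaced by $l+i$ (legitimate since $l+i \geq l \geq k$) and subtracting $b$ from both sides, the estimate becomes the one-step contraction $a_{i+1} \leq (1 - \rho_{l+i})\,a_i$, where $\rho_{l+i} := \frac{1 - \delta_{|{\Gamma}^k_\tau \cup T^{l+i}|}}{(1+\delta_{S})\lceil |{\Gamma}^k_\tau|/S\rceil} \in [0,1]$.

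Next I would make the contraction factor uniform in $i$. Because the estimated supports are nested, $T^{l+i} \subseteq T^{l+\Delta l - 1}$ for every $0 \leq i \leq \Delta l - 1$, and the RIC is monotone nondecreasing in its order, so $\delta_{|{\Gamma}^k_\tau \cup T^{l+i}|} \leq \delta_{|{\Gamma}^k_\tau \cup T^{l+\Delta l - 1}|}$. Hence each $\rho_{l+i} \geq \rho_{\min} := \frac{1 - \delta_{|{\Gamma}^k_\tau \cup T^{l+\Delta l - 1}|}}{(1+\delta_{S})\lceil |{\Gamma}^k_\tau|/S\rceil}$, giving a uniform bound $1 - \rho_{l+i} \leq 1 - \rho_{\min} =: \gamma \in [0,1]$. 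A straightforward induction on $i$ then yields $a_{\Delta l} \leq P_{\Delta l}\,a_0$, where $P_{\Delta l} := \prod_{i=0}^{\Delta l - 1}(1-\rho_{l+i})$: at each step one multiplies the induction hypothesis $a_i \leq P_i a_0$ by the nonnegative factor $1-\rho_{l+i}$, so the telescoping is direction-preserving regardless of the signs of the $a_i$. Since $0 \leq 1-\rho_{l+i} \leq \gamma$, we have $P_{\Delta l} \leq \gamma^{\Delta l}$, and the elementary bound $\gamma = 1 - \rho_{\min} \leq e^{-\rho_{\min}}$ gives $\gamma^{\Delta l} \leq e^{-\rho_{\min}\Delta l} = C_{\tau,l,\Delta l}$, the constant defined in \eqref{eq:constantc}.

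The one delicate point—which I expect to be the main obstacle—is replacing $P_{\Delta l}$ by its exponential upper bound $C_{\tau,l,\Delta l}$ inside the inequality $a_{\Delta l} \leq P_{\Delta l} a_0$, since enlarging the multiplier of $a_0$ is only direction-preserving when $a_0 = \|\mathbf{r}^l\|_2^2 - b \geq 0$. I would resolve this with a short case split. If $a_0 \geq 0$, then $P_{\Delta l} \leq C_{\tau,l,\Delta l}$ immediately gives $a_{\Delta l} \leq P_{\Delta l} a_0 \leq C_{\tau,l,\Delta l}\,a_0$. If $a_0 < 0$, I instead invoke the monotonicity of the least-squares residual, namely $\|\mathbf{r}^{l+\Delta l}\|_2 \leq \|\mathbf{r}^l\|_2$, which holds because $\mathbf{r}^{l+i} = \mathcal{P}^\bot_{T^{l+i}}\mathbf{y}$ is the projection of $\mathbf{y}$ onto the orthogonal complement of the growing column space of $\mathbf{\Phi}_{T^{l+i}}$; this yields $a_{\Delta l} \leq a_0 \leq C_{\tau,l,\Delta l}\,a_0$, the last inequality holding since $C_{\tau,l,\Delta l} \leq 1$ and $a_0 < 0$. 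In both cases \eqref{eq:good1} follows, so the entire argument reduces to iterating the per-iteration decay of Proposition~\ref{prop:rk} and bounding a product of contraction factors by an exponential.
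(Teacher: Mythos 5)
Your proof is correct and follows essentially the same route as the paper's: both iterate the one-step bound of Proposition~\ref{prop:rk}, uniformize the per-iteration contraction factors via the nesting $T^{l+i} \subseteq T^{l+\Delta l - 1}$ and monotonicity of the RIC, and pass to the exponential constant via $1-x \leq e^{-x}$; the only structural difference is that you chain the factors $(1-\rho_{l+i})$ first and exponentiate at the end, while the paper converts each factor to $\exp(-\rho_{l+i})$ first and then chains. Your explicit case split on the sign of $\|\mathbf{r}^l\|_2^2 - \|\mathbf{\Phi}_{\Gamma^k\backslash\Gamma^k_\tau}\mathbf{x}_{\Gamma^k\backslash\Gamma^k_\tau}+\mathbf{v}\|_2^2$, resolved by residual monotonicity, is actually more careful than the paper's treatment, which addresses that subtlety only in a footnote inside the proof of Proposition~\ref{prop:rk}.
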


\begin{proof}
See Appendix \ref{app:prop2}.   \end{proof}

\subsection{Outline of Proof}  \label{sec:outline}
The proof of Theorem \ref{thm:general_2} is based on mathematical induction in $|\Gamma^{k}|$, the number of remaining
indices after $k$ iterations of gOMP. We first consider the
case when $|\Gamma^{k}| = 0$. This case is trivial since all support
indices are already selected ($T \subseteq T^k$) and hence
\begin{eqnarray}
  \|\mathbf{r}^k\|_2 &=& \|\mathbf{y} - \mathbf{\Phi} \hat{\mathbf{x}}^k\|_2 \nonumber \\
  &=& \min_{supp(\mathbf{u}) = T^k} \|\mathbf{y}-\mathbf{\Phi} \mathbf{u}\|_2 \nonumber \\
  &\leq& \|\mathbf{y} - \mathbf{\Phi} \mathbf{x}\|_2 \nonumber \\
  &=& \|\mathbf{v}\|_2 \nonumber \\
  &\leq& \mu_k \|\mathbf{v}\|_2.  \label{eq:rabc}
\end{eqnarray}

Next, we assume that the argument holds up to an integer $\gamma -
1$. Under this inductive assumption, we will prove that it also holds true for $|\Gamma^{k}| = \gamma$. In other words, we will show that when $|\Gamma^{k}| = \gamma$,
\begin{eqnarray} 
\big\|\mathbf{r}^{k + \max \left\{\gamma, \left\lfloor \frac{8
\gamma}{S}\right\rfloor \right\}} \big\|_2 \leq \mu_k \|\mathbf{v}\|_2 \label{eq:42}
\end{eqnarray}
holds true under 
\begin{equation}
\delta_{\max \left\{Sk + 7 \gamma,  Sk + S + \gamma \right\}} \leq \frac{1}{8}. \label{eq:RIC00}
\end{equation}

Although the details of the proof in the induction step are somewhat cumbersome, the main idea is rather simple. 
First, we show that a decent amount of support indices in $\Gamma^k$ can be selected within a specified number of additional iterations so that the number of remaining support indices is upper bounded. More precisely,

\begin{enumerate}[i)]
\item If $L = 1$, the number of remaining support indices after $(k + 1)$ iterations is upper bounded as
\begin{equation}  \label{eq:jjff00}
|\Gamma^{k + 1}| < \gamma. 
\end{equation}

\item 
If $L \geq 2$, the number of remaining support indices after $(k
+ k_L)$ iterations satisfies
\begin{equation} \label{eq:jjff0}
 |\Gamma^{k + k_{L}} | < |\Gamma^k \backslash \Gamma^k_{L - 1}|,
\end{equation}
where
\begin{equation} \label{eq:ki'}
   k_i = 2 \sum_{\tau = 0}^i  \left \lceil\frac{|{\Gamma}^k_{\tau}|}{S} \right \rceil,~~ i = 0, \cdots,   {L}.
\end{equation}

\end{enumerate} 

Second, since \eqref{eq:jjff00} and \eqref{eq:jjff0} imply that the number of remaining support indices is no more than $\gamma - 1$, from the induction hypothesis we have
\begin{eqnarray}
  \big\|\mathbf{r}^{k + 1  +  \max \left\{\gamma, \left\lfloor \frac{8}{S}  |\Gamma^{k + 1}| \right\rfloor \right\}} \big\|_2 \leq  \mu_k
\|\mathbf{v}\|_2,\hspace{6mm}  ~L = 1, \label{eq:1111p0} \\
\big\|\mathbf{r}^{k + k_L  + \max \left\{\gamma, \left\lfloor \frac{8}{S} |\Gamma^{k + k_L}| \right\rfloor \right\}} \big\|_2 \leq  \mu_k
\|\mathbf{v}\|_2, ~~~L \geq 2.  \label{eq:1111p}
\end{eqnarray}
Further, by estimating $k + 1  +  \max \left\{\gamma, \left\lfloor \frac{8}{S}  |\Gamma^{k + 1}| \right\rfloor \right\}$ in~\eqref{eq:1111p0} and $k + k_L  + \max \left\{\gamma, \left\lfloor \frac{8}{S} |\Gamma^{k + k_L}| \right\rfloor \right\}$ in~\eqref{eq:1111p}, we establish the induction step. Specifically, 

\vspace{1mm}

\begin{enumerate}[i)]
\item $L = 1$ case: We obtain from \eqref{eq:jjff00} that
\begin{eqnarray}
\lefteqn{k + 1  +  \max \left\{\gamma, \left\lfloor \frac{8}{S}  |\Gamma^{k + 1}|  \right\rfloor \right\} }\nonumber \\
&\leq& k + 1  +  \max \left\{\gamma, \left\lfloor \frac{8}{S}  (\gamma - 1) \right\rfloor \right\} \nonumber \\
&\leq& k + \max \left\{\gamma, \left\lfloor \frac{8 \gamma}{S}   \right\rfloor \right\}.\label{eq:90}
\end{eqnarray}
By noting that the residual power of gOMP is non-increasing ($\|\mathbf{r}^i\|_2
\leq \|\mathbf{r}^j\|_2$ for $i \geq j$), we have  
\begin{eqnarray}
\big\|\mathbf{r}^{k + \max \left\{\gamma,  \left\lfloor \frac{8 \gamma}{S} \right\rfloor \right\}}  \big\|_2
    &\leq& \big\|\mathbf{r}^{k + 1  +  \max\left\{\gamma, \left\lfloor \frac{8}{S}  |\Gamma^{k + 1}| \right\rfloor \right\}} \big\|_2 \nonumber \\
    &\leq& \mu_k
\|\mathbf{v}\|_2. 
\end{eqnarray} 

\item $L \geq 2$ case: We observe from \eqref{eq:ki'} that
\begin{eqnarray}
 k_{L} &=& 2 \sum_{\tau = 0}^L  \left \lceil\frac{|{\Gamma}^k_{\tau}|}{S} \right \rceil  \nonumber \\
 &=& 2 \sum_{\tau = 1}^L  \left \lceil\frac{|{\Gamma}^k_{\tau}|}{S} \right \rceil  \nonumber \\
 &\leq& 2 \sum_{\tau = 1}^{L}  2^{\tau - 1}  \nonumber \\
 &=&  2 (2^{L} - 1), \label{eq:aaaaaaa2}
\end{eqnarray}
which together with \eqref{eq:jjff0} implies that
\begin{eqnarray}
  \lefteqn{k + k_L  + \max \left\{\gamma,  \left\lfloor \frac{8}{S} |\Gamma^{k + k_L}|
\right\rfloor \right\}} \nonumber \\
  &\leq& k + 2 ( 2^{L} - 1) + \max \left\{\gamma, \left\lfloor\frac{8}{S} |\Gamma^k \backslash \Gamma^k_{L - 1}| \right\rfloor \right\} \nonumber \\ 
  &=& k + 2 ( 2^{L} - 1) + \max \left\{\gamma, \left\lfloor\frac{8}{S} (\gamma - 2^{{L}- 2} S) 
  \right\rfloor \right\} \nonumber \\  
  &\leq& k + \max \left\{\gamma, \left\lfloor\frac{8 \gamma}{S} \right\rfloor \right\},
  \label{eq:48}
\end{eqnarray}
Hence, we obtain from \eqref{eq:1111p} and \eqref{eq:48} that 
\begin{eqnarray}
\big\|\mathbf{r}^{k + \max \left\{\gamma,  \left\lfloor \frac{8 \gamma}{S} \right\rfloor \right\}} \big\|_2
    &\leq& \big\|\mathbf{r}^{k + k_L  + \max\left\{\gamma, \left\lfloor \frac{8}{S} |\Gamma^{k + k_L}| \right\rfloor \right\}} \big\|_2 \nonumber \\
    &\leq& \mu_k
\|\mathbf{v}\|_2. \label{eq:39wuxuzheng}
\end{eqnarray} 
\end{enumerate}

In summary, what remains now is the proofs for \eqref{eq:jjff00} and~\eqref{eq:jjff0}.

\subsection{Proof of \eqref{eq:jjff0}} 
We consider the proof of \eqref{eq:jjff0} for the case of $L \geq 2$. Instead of directly proving
\eqref{eq:jjff0}, we show that a sufficient condition for
\eqref{eq:jjff0} is true. To be specific, since $\mathbf{x}_{\Gamma^{k} \backslash
{\Gamma}^k_{{L}- 1}}$ consists of $|{\Gamma^{k} \backslash
{\Gamma}^k_{{L}- 1}}|$ smallest non-zero elements (in magnitude) in $\mathbf{x}_{\Gamma^{k}}$, a sufficient condition for \eqref{eq:jjff0} is
\begin{equation} \label{eq:aasssd1}
\| \mathbf{x}_{\Gamma^{k + k_{L}}} \|_2^2 <
\|\mathbf{x}_{\Gamma^{k} \backslash {\Gamma}^k_{{L}- 1}} \|_2^2.
\end{equation}
In this subsection, we show that \eqref{eq:aasssd1} is true under \begin{equation}
\delta_{Sk + 7 \gamma } \leq \frac{1}{8}. \label{eq:cond1}
\end{equation}
To the end, we first construct lower and upper
bounds for $\| \mathbf{r}^{k + k_{L}} \|_2$ and then use these
bounds to derive a condition guaranteeing \eqref{eq:aasssd1}. 

\vspace{1mm}

\noindent \textbf{1) Lower bound for $\| \mathbf{r}^{k + k_{L}}\|_2^2$}:
\begin{eqnarray}
\lefteqn{\| \mathbf{r}^{k + k_{L}} \|_2} \nonumber \\ 
\hspace{-2mm} &=& \hspace{-2mm} \|\mathbf{y} - \mathbf{\Phi}  \hat{\mathbf{x}}^{k + k_{L}} \|_2 \nonumber \\
 \hspace{-2mm} &=& \hspace{-2mm} \| \mathbf{\Phi} ( \mathbf{x} - \hat{\mathbf{x}}^{k + k_{L}} )  + \mathbf{v} \|_2 \nonumber  \\
 \hspace{-2mm} &\geq& \hspace{-2mm} \| \mathbf{\Phi} (\mathbf{x} - \hat{\mathbf{x}}^{k + k_{L}}) \|_2 - \| \mathbf{v} \|_2 \nonumber  \\
 \hspace{-2mm} &\overset{(a)}{\geq}& \hspace{-2mm} \left(1 - \delta_{|T \cup T^{k + k_{L}}|}\right)^{1/2} \| \mathbf{x} - \hat{\mathbf{x}}^{k + k_{L}}  \|_2 - \| \mathbf{v} \|_2  \nonumber \label{eq:galdg} \\
 \hspace{-2mm} &\geq&  \hspace{-2mm}\left(1 - \delta_{|T \cup T^{k + k_{L}}|}\right)^{1/2} \| \mathbf{x}_{\Gamma^{k + k_{L}}} \|_2 - \| \mathbf{v}\|_2,  \label{eq:low}
\end{eqnarray}
where (a) is from the RIP (note that $\mathbf{x} - \hat{\mathbf{x}}^{k +
k_{L}}$ is supported on $T \cup T^{k + k_{L}}$). 

\vspace{1mm}

\noindent \textbf{2) Upper bound for $\|\mathbf{r}^{k + k_{L}}\|_2^2$}:
\vspace{1mm}

First, by applying Proposition \ref{prop:residual}, we have
\begin{eqnarray}
 \lefteqn{\hspace{-7mm}  \|{{\mathbf{r}}^{k + k_1}} \|_{2}^{2} -
\|\mathbf{\Phi}_{{\Gamma^{k}} \backslash {\Gamma}^k_1}
\mathbf{x}_{{\Gamma^{k}} \backslash {\Gamma}^k_1} + \mathbf{v}
\|_2^2 \leq C_{1,k, k_1}} \nonumber \\
&&\hspace{-18mm} \times \left(\|\mathbf{r}^{k} \|_2^2 -
\|\mathbf{\Phi}_{{\Gamma^{k}} \backslash {\Gamma}^k_1}
\mathbf{x}_{{\Gamma^{k}} \backslash {\Gamma}^k_1} +
\mathbf{v}\|_2^2\right), \label{eq:11f2}
\\
 \lefteqn{\hspace{-7mm}  \| \mathbf{r}^{k + k_2} \|_2^2 - \|\mathbf{\Phi}_{{\Gamma^{k}}
\backslash {\Gamma}^k_2} \mathbf{x}_{{\Gamma^{k}} \backslash
{\Gamma}^k_2} + \mathbf{v} \|_2^2 \leq C_{2, k + k_1, k_2 -
k_1}} \nonumber \\
&&\hspace{-18mm} \times  \left(\| \mathbf{r}^{k + k_1} \|_2^2 -
\|\mathbf{\Phi}_{{\Gamma^{k}} \backslash {\Gamma}^k_2}
\mathbf{x}_{{\Gamma^{k}} \backslash {\Gamma}^k_2} + \mathbf{v}
\|_2^2\right),  \label{eq:12f2}
\\
 &~~~~~~~~~~~~&~\vdots \nonumber
\\
 \lefteqn{\hspace{-7mm} \| \mathbf{r}^{k + k_{L}} \|_2^2 -
\|\mathbf{\Phi}_{{\Gamma^{k}} \backslash {\Gamma}^k_{L}}
\mathbf{x}_{{\Gamma^{k}} \backslash {\Gamma}^k_{L}} + \mathbf{v}
\|_2^2 \leq C_{L, k + k_{{L}- 1}, k_L - k_{L - 1}}} \nonumber \\
&&\hspace{-18mm} \times     \left(\| \mathbf{r}^{k + k_{{L}- 1}} \|_2^2 -
\|\mathbf{\Phi}_{{\Gamma^{k}} \backslash {\Gamma}^k_{L}}
\mathbf{x}_{{\Gamma^{k}} \backslash
{\Gamma}^k_{L}}  + \mathbf{v} \|_2^2\right). 
\label{eq:13f2}
\end{eqnarray}
From \eqref{eq:ki'} and monotonicity of the RIC, we have 
\begin{eqnarray}
 C_{i, k + k_{i - 1}, k_i - k_{i - 1}} &=& \exp \left( - 2 \cdot \frac{1 - \delta_{|{\Gamma}^k_i \cup T^{k + k_{i} - 1}|} }{1 + \delta_{S}}
 \right) \nonumber \\
 &\overset{(a)}{\leq}& \exp \left(- 2 \cdot \frac{1 - \delta_{Sk + 7 \gamma}}{1 + \delta_{Sk + 7 \gamma}} \right)  \nonumber \\
  &\overset{(b)}{\leq}& \exp\left(- \frac{14}{9}\right),
\label{eq:jiajian}
\end{eqnarray} 
for $i = 1, 2, \cdots, L$, where (a) is due to monotonicity of the RIC and (b) is from \eqref{eq:RIC00}. Notice that (a) is because
\begin{eqnarray}
|{\Gamma}^k_i \cup T^{k + k_{i} - 1}| &\leq& |T \cup T^{k + k_{L}}| \nonumber \\
 &=& |T^{k + k_{L}}| + |\Gamma^{k + k_{L}}|  \nonumber \\
 &\leq& S (k + k_{L}) + |\Gamma^k|  \nonumber\\
 &\overset{(c)}{\leq}& Sk + 2 ( 2^{L} - 1) S + \gamma \label{eq:final222} \nonumber
 \\
 &\overset{(d)}{<}& Sk + 8 \left(\frac{2\sigma - 2}{2 \sigma - 1} \right) \gamma + {\gamma} - 2 S \label{eq:youxiude2} \nonumber \\
 &\overset{(e)}{<}& Sk + 7 \gamma, \label{eq:youxiude23}
\end{eqnarray}
where (c) follows from \eqref{eq:aaaaaaa2}, (d) is from \eqref{eq:next2},
and (e) is due to $\sigma = \frac{1}{2}  \exp\left(\frac{14}{9}\right)$.

For notational simplicity, we let $\eta = \exp \left(- \frac{14}{9}\right)$. Then \eqref{eq:11f2}--\eqref{eq:13f2} can be
rewritten as
\begin{eqnarray}
 && \hspace{-8mm} \| \mathbf{r}^{k + k_1} \|_2^2
  \leq \eta \| \mathbf{r}^{k} \|_2^2 + (1 - \eta)   \|\mathbf{\Phi}_{{\Gamma^{k}} \backslash {\Gamma}^k_1}  \mathbf{x}_{{\Gamma^{k}} \backslash {\Gamma}^k_1} + \mathbf{v}\|_2^2, \label{eq:11ff2}  \nonumber \\
 && \hspace{-8mm} \| \mathbf{r}^{k + k_2} \|_2^2 \leq \eta \| \mathbf{r}^{k +
k_1} \|_2^2 + (1 - \eta) \|\mathbf{\Phi}_{{\Gamma^{k}}
\backslash {\Gamma}^k_2}
  \mathbf{x}_{{\Gamma^{k}} \backslash {\Gamma}^k_2} + \mathbf{v}\|_2^2, \label{eq:12ff2}  \nonumber \\
  &&\vdots \nonumber \\
 && \hspace{-8mm} \| \mathbf{r}^{k + k_{L}} \|_2^2
  \leq \eta \| \mathbf{r}^{k + k_{{L}- 1}} \|_2^2 + (1 - \eta)  \| \mathbf{\Phi}_{{\Gamma^{k}}
\backslash {\Gamma}^k_{L}}
  \mathbf{x}_{{\Gamma^{k}} \backslash {\Gamma}^k_{L}} +
  \mathbf{v}\|_2^2.  \nonumber
\end{eqnarray}
Some additional manipulations yield the following result.
\begin{eqnarray}
 ~~&& \hspace{-9.5mm} \| \mathbf{r}^{k + k_{L}} \|_2^2 \nonumber \\ 
 && \hspace{-9.5mm} \leq \eta^{L} \| \mathbf{r}^{k} \|_2^2 + (1 - \eta) \sum_{\tau = 1}^{L}\eta^{{L}-
  \tau} \|\mathbf{\Phi}_{{\Gamma^{k}} \backslash {\Gamma}^k_\tau}
\mathbf{x}_{{\Gamma^{k}} \backslash {\Gamma}^k_\tau} + \mathbf{v}
\|_2^2 \nonumber \\
 && \hspace{-9.8mm} \overset{(a)}{\leq} \hspace{-0.5mm} \eta^{L} \|\mathbf{\Phi}_{\Gamma^k}\mathbf{x}_{\Gamma^{k}} \hspace{-0.5mm} + \hspace{-0.5mm}
\mathbf{v}\|_2^2  \hspace{-0.5mm} + \hspace{-0.5mm} (1 \hspace{-0.5mm} - \hspace{-0.5mm} \eta) \hspace{-0.5mm} \sum_{\tau = 1}^{L}  \hspace{-0.5mm} \eta^{{L}- \tau}  
\|\mathbf{\Phi}_{{\Gamma^{k}} \backslash {\Gamma}^k_\tau}
\mathbf{x}_{{\Gamma^{k}} \backslash {\Gamma}^k_\tau} \hspace{-0.5mm} + \hspace{-0.5mm}
\mathbf{v}\|_2^2  \nonumber \\
 && \hspace{-9.5mm} \overset{(b)}{\leq} \eta^{L}
((1 + t)\|\mathbf{\Phi}_{\Gamma^k}\mathbf{x}_{\Gamma^{k}}\|_2^2 + (1 + t^{-1})
\|\mathbf{v}\|_2^2) + (1 - \eta) \nonumber \\
 && \hspace{-6mm} \times \sum_{\tau = 1}^{L} \eta^{{L}-
\tau}  \left((1 + t)\|\mathbf{\Phi}_{{\Gamma^{k}} \backslash {\Gamma}^k_\tau}
\mathbf{x}_{{\Gamma^{k}} \backslash {\Gamma}^k_\tau}\|_2^2  + (1 + t^{-1}) \|\mathbf{v}\|_2^2 \right), \nonumber \\ 
&& \hspace{-9.5mm}  \overset{(c)}{\leq} \hspace{-1mm} \left( \eta^L 
\|\mathbf{x}_{\Gamma^{k} \backslash {\Gamma}^{k}_0}\|_2^2 +  (1 - \eta) \sum_{\tau = 1}^{L}\eta^{{L}- \tau} \|\mathbf{x}_{{\Gamma^{k}} \backslash {\Gamma}^k_\tau} \|_2^2 \right) \hspace{-0.5mm} (1 + t) \nonumber \\
 && \hspace{-6mm} \times (1 + \delta_{\gamma}) + (1 + t^{-1}) \left(\eta^L + (1 - \eta) \sum_{\tau = 1} ^ L \eta^{L - \tau} \right)   \|\mathbf{v}\|_2^2, \nonumber \\ \label{eq:good52373} 
\end{eqnarray}
where (a) is from Proposition \ref{prop:rk}, (b) uses the fact that 
\begin{equation}
 \|\mathbf{u} + \mathbf{v}\|_2^2 \leq (1 + t) \|\mathbf{u}\|_2^2 + (1 + t^{-1}) \|\mathbf{v}\|_2^2 \label{eq:in8}
 \end{equation} for $t>0$ (we will specify $t$ later), and (c) is due to the RIP.
(Note that $|{\Gamma^{k}} \backslash {\Gamma}^k_\tau| \leq |\Gamma^k| =
\gamma$ for $\tau = 1, \cdots, L$.)

By applying \eqref{eq:next22} to \eqref{eq:good52373},
we further have
\begin{eqnarray}
 ~~&& \hspace{-9.5mm} \| \mathbf{r}^{k + k_{L}} \|_2^2 \nonumber \\ 
 && \hspace{-9.5mm} {\leq} \hspace{-1mm}  \left( {\sigma^{L
- 1}  \eta^{L}} + (1 - \eta) \sum_{\tau = 1}^{L} \sigma ^{{L}- 1 -
\tau} \eta^{{L}- \tau}
\right) \hspace{-0.5mm} (1 + t) (1 + \delta_{\gamma}) \nonumber \\
 && \hspace{-6mm} \times  \| \mathbf{x}_{{\Gamma^{k}} \backslash {\Gamma}^k_{{L}- 1}} \|_2^2 + (1 \hspace{-0.25mm} + \hspace{-0.25mm} t^{-1}) \hspace{-0.5mm} \left(\hspace{-0.5mm} \eta^L\hspace{-0.5mm}  + (1\hspace{-0.25mm} - \hspace{-0.25mm}\eta) \sum_{\tau = 1} ^ L \eta^{L - \tau} \hspace{-0.5mm} \right) \hspace{-0.5mm}  \|\mathbf{v}\|_2^2  \nonumber \label{eq:good524} \\
 && \hspace{-9.5mm} = \hspace{-1mm}  
\left({(\sigma \eta
)^{{L}}} + (1 - \eta)  \sum_{\tau = 0}^{L - 1} (
  \sigma  \eta )^{\tau} \right) {\sigma}^{-1} (1 + \delta_{\gamma})   \| \mathbf{x}_{{\Gamma^{k}} \backslash {\Gamma}^k_{{L}- 1}}
  \|_2^2  \nonumber \\
 && \hspace{-6mm} \times (1 + t)  + (1  +   t^{-1})  \left( \eta^L + (1 - \eta) \sum_{\tau = 0} ^ {L - 1} \eta^{\tau}  \right)  \|\mathbf{v}\|_2^2   \nonumber \\
 && \hspace{-9.5mm}  \overset{(a)}{<}  \hspace{-1mm}  \left(\sum_{\tau = L}^{\infty} (
  \sigma  \eta )^{\tau} \hspace{-1mm} + \hspace{-0.75mm}  \sum_{\tau = 0}^{L - 1} (
  \sigma  \eta )^{\tau} \hspace{-1mm}  \right) \hspace{-0.5mm}  {\sigma}^{-1} (1 \hspace{-0.5mm}  - \hspace{-0.5mm}  \eta)  (1\hspace{-0.5mm}  + \hspace{-0.25mm} \delta_{\gamma}) \|  \mathbf{x}_{{\Gamma^{k}} \backslash {\Gamma}^k_{{L}- 1}} \hspace{-0.5mm}  \|_2^2 
\nonumber \\
  && \hspace{-6mm} \times  (1 + t) + (1  +   t^{-1}) (1 - \eta) \left(\sum_{\tau = L} ^ {\infty} \eta^\tau +  \sum_{\tau = 0} ^ {L - 1} \eta^{\tau}  \right)  \|\mathbf{v}\|_2^2  \label{eq:meihaojiayuan2}  \nonumber \\
   &&  \hspace{-9.5mm} \overset{(b)}{=} \hspace{-.5mm} 4 \eta(1 - \eta)(1 + \delta_{\gamma})(1 - t) \|
\mathbf{x}_{{\Gamma^{k}} \backslash {\Gamma}^k_{{L}- 1}}
  \|_2^2 + (1 + t^{-1}) \|\mathbf{v}\|_2^2, \nonumber \\ \label{eq:up}
\end{eqnarray}
where (a) is because $\sigma \geq 2$, $\sigma \eta < 1$, and $\eta < 1$. Hence
\begin{eqnarray}
  (\sigma  \eta )^{L} &<& \left(\frac{1 - \eta}{1 - \sigma \eta}\right) (\sigma
\eta)^{L} = (1 - \eta) \sum_{\tau = L}^{\infty} (\sigma
\eta)^{\tau}, \nonumber \\
  \eta^{L} &=& (1 - \eta) \left(\frac{\eta^L}{1 - \eta}\right)   = (1 - \eta) \sum_{\tau = L}^{\infty}  
\eta^{\tau}, \nonumber   
\end{eqnarray} 
and (b) uses the fact that $\sigma \eta = \frac{1}{2}$.

\vspace{1mm} 

Thus far, we have obtained a lower bound for $\|
\mathbf{r}^{k + k_{L}} \|_2$ in~\eqref{eq:low} and an upper bound for $\|
\mathbf{r}^{k + k_{L}} \|_2$ in \eqref{eq:up}, respectively. Next, we will use these bounds to prove that \eqref{eq:aasssd1} holds true
under $\delta_{Sk + 7 \gamma} \leq \frac{1}{8}$. 

By relating \eqref{eq:low} and \eqref{eq:up}, we have%
\begin{equation} \label{eq:conditiona}
    \| \mathbf{x}_{\Gamma^{k + k_{L}}} \|_2 \leq \alpha \|
  \mathbf{x}_{{\Gamma^{k}} \backslash {\Gamma}^k_{{L}- 1}} \|_2 + \beta \|\mathbf{v}\|_2
\end{equation}
where
\begin{equation}
\alpha = 2 \left({\frac{ \eta (1 - \eta)(1 + \delta_{\gamma}) (1 + t)}{1
- \delta_{|T \cup T^{k + k_{L}}|}}} \right)^{1/2}
\label{eq:347}
\end{equation} 
and
\begin{equation}
\beta = \left((1 + t^{-1})^{1/2} + 1 \right) \left(1 - \delta_{|T \cup T^{k + k_{L}}|}
\right)^{-1/2}. \label{eq:3471}
\end{equation}
Since $\delta_{|T
\cup T^{k + k_{L}}|} \leq \delta_{Sk + 7 \gamma}$ by monotonicity of the RIC,  
\begin{equation}
\alpha  \leq 2 \left(\hspace{-.5mm} {\frac{(1 + \delta_{Sk + 7 \gamma})(1 + t) \left(\hspace{-.5mm} 1 \hspace{-.5mm} -\hspace{-.5mm}  \exp \left(- \frac{14}{9} \right) \right) } {(1 - \delta_{Sk + 7 \gamma })\exp \left(\frac{14}{9}  \right)}}\hspace{-.5mm}
\right)^{1/2}. \label{eq:53o}
\end{equation}
By choosing $t = \frac{1}{6}$ in \eqref{eq:53o}, we have
\begin{equation}
\alpha < 1
\end{equation}
under $\delta_{Sk + 7 \gamma } \leq \frac{1}{8}$.

Now, we consider two cases: 1) $\beta \|\mathbf{v}\|_2 < (1 - \alpha ) \|
\mathbf{x}_{{\Gamma^{k}} \backslash {\Gamma}^k_{{L}- 1}} \|_2$ and 2) $\beta \|\mathbf{v}\|_2 \geq (1 - \alpha ) \|
\mathbf{x}_{{\Gamma^{k}} \backslash {\Gamma}^k_{{L}- 1}} \|_2$. 
First, if $\beta \|\mathbf{v}\|_2 < (1 - \alpha ) \|
\mathbf{x}_{{\Gamma^{k}} \backslash {\Gamma}^k_{{L}- 1}} \|_2$,
\eqref{eq:conditiona} implies \eqref{eq:aasssd1} (i.e., $\|
\mathbf{x}_{\Gamma^{k + k_{L}}} \|_2^2 <
\|\mathbf{x}_{\Gamma^{k} \backslash {\Gamma}^k_{{L}- 1}} \|_2^2$)
so that \eqref{eq:jjff0} holds true.

Second, if $\beta \|\mathbf{v}\|_2 \geq (1 - \alpha ) \|
\mathbf{x}_{{\Gamma^{k}} \backslash {\Gamma}^k_{{L}- 1}} \|_2$, then \eqref{eq:42} directly holds true
because
\begin{eqnarray}
  &&\hspace{-9mm} \big\|\mathbf{r}^{k + \max \left\{\gamma, \left\lfloor\frac{8 \gamma}{S} \right\rfloor \right\}} \big \|_2 \nonumber \\ 
  &&\hspace{-7mm} \overset{(a)}{\leq}  \| \mathbf{r}^{k + k_{L}} \big\|_2 \nonumber \\
  &&\hspace{-7mm}\overset{(b)}{\leq} \hspace{-.5mm} 2 \left( \eta (1 \hspace{-.5mm} - \hspace{-.5mm} \eta)  ( 1\hspace{-.5mm} + \hspace{-.5mm} \delta_{\gamma} ) (1\hspace{-.5mm} +\hspace{-.5mm} t)  \right)^{1/2} \hspace{-.5mm} \|\mathbf{x}_{{\Gamma^{k}} \backslash {\Gamma}^k_{{L}- 1}} \hspace{-.5mm} \|_2 \hspace{-.5mm}  + \hspace{-.5mm} (1 \hspace{-.5mm} + \hspace{-.5mm} t^{-1})^{1/2} \|\mathbf{v}\|_2 \nonumber \\
  &&\hspace{-7mm}\overset{(c)}{=}  \alpha (1 - \delta_{|T \cup T^{k + k_{L}}|})^{1/2} \|\mathbf{x}_{{\Gamma^{k}} \backslash
{\Gamma}^k_{{L}- 1}} \|_2 + (1 + t^{-1})^{1/2} \|\mathbf{v}\|_2 \nonumber \\
  &&\hspace{-7mm} \leq   \left( \frac{\alpha \beta (1 - \delta_{|T \cup T^{k + k_{L}}|})^{1/2}}{1 - \alpha} + (1 + t^{-1})^{1/2}\right) \|\mathbf{v}\|_2 \nonumber \\
  &&\hspace{-7mm}=  
\left(\frac{(1 + t^{-1})^{1/2} + 1}{1 - \alpha} - 1\right)\|\mathbf{v}\|_2
\nonumber \\
  &&\hspace{-7mm} {\leq}~  \mu_k \|\mathbf{v}\|_2, \label{eq:5262}
\end{eqnarray}
where 
\begin{eqnarray}   
\mu_k \hspace{-.5mm} = && \hspace{-6.65mm} \left(\hspace{-.5mm} 1 \hspace{-.5mm}- \hspace{-.5mm} 2 \left({\frac{{7} (1
\hspace{-.25mm} + \hspace{-.25mm} \delta) \left(1 \hspace{-.25mm} - \hspace{-.25mm} \exp \left(- \frac{14}{9} \right) \right) } {{6} (1 - \delta) \exp \left(\frac{14}{9} \right) }}  \right)^{\hspace{- .5mm}1/2} \right)^{\hspace{-1mm}-1} \hspace{-1mm} (\sqrt 7 \hspace{-.25mm} + \hspace{-.25mm} 1)\hspace{-.25mm}  - \hspace{-.5mm} 1 
\nonumber \\
\label{eq:ck99}
\end{eqnarray}
where $\delta = \delta_{\max \left\{Sk + 7 \gamma,  Sk + S + \gamma \right\}}$, (a) is from \eqref{eq:48} and the fact that the residual power of gOMP is always non-increasing, (b) is due to \eqref{eq:up}, and (c) is from~\eqref{eq:347}.

\subsection{Proof of \eqref{eq:jjff00}}
The proof of \eqref{eq:jjff00} is similar to the proof of \eqref{eq:jjff0}. Instead of directly proving \eqref{eq:jjff00}, we will show that a sufficient condition for
\eqref{eq:jjff00} is true. More precisely, we will prove that 
\begin{equation} \label{eq:aasssd10}
\| \mathbf{x}_{\Gamma^{k + 1}} \|_2^2 <
\|\mathbf{x}_{\Gamma^{k}} \|_2^2
\end{equation} 
holds true under \begin{equation}
\delta_{S(k + 2) + \gamma} \leq \frac{1}{8}.
\end{equation}
We first construct lower and upper
bounds for $\| \mathbf{r}^{k + 1} \|_2$ and then use these
bounds to derive a condition guaranteeing \eqref{eq:aasssd10}. 

\vspace{1mm}
\noindent \textbf{1) Lower bound for $\| \mathbf{r}^{k + 1}\|_2^2$}:
\begin{eqnarray}
\lefteqn{\| \mathbf{r}^{k + 1} \|_2} \nonumber \\ 
\hspace{-2mm} &=& \hspace{-2mm} \|\mathbf{y} - \mathbf{\Phi}  \hat{\mathbf{x}}^{k + 1} \|_2 \nonumber \\
 \hspace{-2mm} &=& \hspace{-2mm} \| \mathbf{\Phi} ( \mathbf{x} - \hat{\mathbf{x}}^{k + 1} )  + \mathbf{v} \|_2 \nonumber  \\
 \hspace{-2mm} &\geq& \hspace{-2mm} \| \mathbf{\Phi} (\mathbf{x} - \hat{\mathbf{x}}^{k + 1}) \|_2 - \| \mathbf{v} \|_2 \nonumber  \\
 \hspace{-2mm} &\overset{(a)}{\geq}& \hspace{-2mm} \left(1 - \delta_{|T \cup T^{k + 1}|}\right)^{1/2} \| \mathbf{x} - \hat{\mathbf{x}}^{k + 1}  \|_2 - \| \mathbf{v} \|_2  \nonumber   \\
 \hspace{-2mm} &\geq&  \hspace{-2mm}\left(1 - \delta_{|T \cup T^{k + 1}|}\right)^{1/2} \| \mathbf{x}_{\Gamma^{k + 1}} \|_2 - \| \mathbf{v}\|_2,  \label{eq:low0}
\end{eqnarray}
where (a) is because $\mathbf{x} - \hat{\mathbf{x}}^{k + 1}$ is supported on $T \cup T^{k + 1}$.

\vspace{1mm}
\noindent \textbf{2) Upper bound for $\|\mathbf{r}^{k + 1}\|_2^2$}:

\vspace{1mm}
By applying Proposition \ref{prop:rk} with $l = k$ and $\tau = 1$, we have
\begin{eqnarray}  
\lefteqn{ \hspace{-3mm}  \|\mathbf{r}^k\|_2^2 - \|\mathbf{r}^{k + 1}\|_2^2} \nonumber \\
&& \hspace{-6mm} \geq 
\frac{1 - \delta_{|{\Gamma}^k_{1} \cup T^k |}} {(1 + \delta_{S})
\left\lceil \frac{|{\Gamma}^k_1 |}{S} \right\rceil}  \left( \|\mathbf{r}^k \|_2^2 -
\|\mathbf{\Phi}_{{\Gamma^{k}} \backslash {\Gamma}^k_1}
\mathbf{x}_{{\Gamma^{k}} \backslash {\Gamma}^k_1} + \mathbf{v}
\|_2^2 \right) \nonumber \\
&& \hspace{-6mm} \overset{(a)}{=} 
\frac{1 - \delta_{|{\Gamma}^k_{1} \cup T^k |}} {1 + \delta_{S}} \left( \|\mathbf{r}^k \|_2^2 -
\|\mathbf{\Phi}_{{\Gamma^{k}} \backslash {\Gamma}^k_1}
\mathbf{x}_{{\Gamma^{k}} \backslash {\Gamma}^k_1} + \mathbf{v}
\|_2^2 \right), \label{eq:residual3goodaaa}
\end{eqnarray}
where (a) uses the fact that $|\Gamma^k_1| \leq S$ (see \eqref{eq:jjjjffff}) and hence $\left\lceil \frac{|{\Gamma}^k_1 |}{S} \right\rceil = 1$. Rearranging the terms yields
\begin{eqnarray}  
\lefteqn{\|\mathbf{r}^{k + 1}\|_2^2 \leq \left(1 -  \frac{1 - \delta_{|{\Gamma}^k_{1} \cup T^k |}} {1 + \delta_{S}}  \right)   \|\mathbf{r}^k \|_2^2 } \nonumber \\
&& \hspace{6mm} + \frac{1 - \delta_{|{\Gamma}^k_{1} \cup T^k |}} {1 + \delta_{S}} \|\mathbf{\Phi}_{{\Gamma^{k}} \backslash {\Gamma}^k_1}
\mathbf{x}_{{\Gamma^{k}} \backslash {\Gamma}^k_1} + \mathbf{v}
\|_2^2. \label{eq:62eq}
\end{eqnarray}
From Proposition \ref{prop:rk}, 
\begin{eqnarray}  
\|\mathbf{r}^k \|_2^2 &\leq& \|\mathbf{\Phi}_{\Gamma^k} \mathbf{x}_{\Gamma^k} + \mathbf{v}\|_2^2 \nonumber \\
&\overset{(a)}{\leq}& (1 + t) \|\mathbf{\Phi}_{\Gamma^k} \mathbf{x}_{\Gamma^k} \|_2^2 + (1 + t^{-1}) \|\mathbf{v}\|_2^2 \nonumber \\
&\overset{(b)}{\leq}& (1 + t) (1 + \delta_\gamma) \| \mathbf{x}_{\Gamma^k} \|_2^2 + (1 + t^{-1}) \|\mathbf{v}\|_2^2, ~~~\label{eq:63eq}
\end{eqnarray}
where (a) is from \eqref{eq:in8} and (b) is due to the RIP. 
Moreover,
\begin{eqnarray}
\lefteqn{\|\mathbf{\Phi}_{{\Gamma^{k}} \backslash {\Gamma}^k_1}
\mathbf{x}_{{\Gamma^{k}} \backslash {\Gamma}^k_1} + \mathbf{v}
\|_2^2} \nonumber \\
&\overset{(a)}{\leq}& (1 + t) \|\mathbf{\Phi}_{{\Gamma^{k}} \backslash {\Gamma}^k_1}
\mathbf{x}_{{\Gamma^{k}} \backslash {\Gamma}^k_1}\|_2^2 + (1 + t^{-1}) \| \mathbf{v}
\|_2^2 \nonumber \\
&\overset{(b)}{\leq}& (1 + t) (1 + \delta_\gamma) \|\mathbf{x}_{{\Gamma^{k}} \backslash {\Gamma}^k_1}\|_2^2 + (1 + t^{-1}) \| \mathbf{v}
\|_2^2 \nonumber \\
&\overset{(b)}{\leq}& (1 + t) (1 + \delta_\gamma) \sigma^{-1} \|\mathbf{x}_{{\Gamma^{k}}}\|_2^2 + (1 + t^{-1}) \| \mathbf{v}
\|_2^2,~~ \label{eq:64eq}
\end{eqnarray}
where (a) is from \eqref{eq:in8}, (b) is due to the RIP, and (c) is from \eqref{eq:mu4}.

Using \eqref{eq:62eq}, \eqref{eq:63eq}, and \eqref{eq:64eq}, we have
\begin{eqnarray}
&& \hspace{-7mm} \|\mathbf{r}^{k + 1}\|_2^2 \leq (1 + t) (1 + \delta_\gamma) \nonumber \\
&& \hspace{-7mm} \times \left(1 -  \frac{(1 - \sigma^{-1})  (1 - \delta_{|{\Gamma}^k_{1} \cup T^k |})} {1 + \delta_{S}}  \right) \|\mathbf{x}_{{\Gamma^{k}}}\|_2^2 + (1 + t^{-1}) \|\mathbf{v}\|_2^2, \nonumber  
\end{eqnarray}
from which we obtain an upper bound for $\|\mathbf{r}^{k + 1}\|_2$ as 
\begin{eqnarray}
&& \hspace{-5mm} \|\mathbf{r}^{k + 1}\|_2 \leq (1 + t)^{1/2} \left(1 -  \frac{(1 - \sigma^{-1})  (1 - \delta_{|{\Gamma}^k_{1} \cup T^k |})} {1 + \delta_{S}}  \right)^{1/2} \nonumber \\
&& \hspace{12mm} \times (1 + \delta_\gamma)^{1/2}  \|\mathbf{x}_{{\Gamma^{k}}}\|_2 + (1 + t^{-1})^{1/2} \|\mathbf{v}\|_2.  \label{eq:upper0}
\end{eqnarray}

\vspace{1mm}
Thus far, we have established a lower bound for $\|
\mathbf{r}^{k + 1} \|_2$ in \eqref{eq:low0} and an upper bound for $\|
\mathbf{r}^{k + 1} \|_2$ in \eqref{eq:upper0}. Now we combine \eqref{eq:low0} and \eqref{eq:upper0} to obtain
\begin{equation}
\| \mathbf{x}_{\Gamma^{k + 1}} \|_2 \leq \alpha' \|\mathbf{x}_{{\Gamma^{k}}}\|_2 + \beta' \|\mathbf{v}\|_2, \label{eq:65io}
\end{equation}
where
\begin{eqnarray}
&& \hspace{-10mm}  \alpha' = \left( \frac{(1\hspace{-.5mm} + t) (1 \hspace{-.5mm} + \delta_\gamma)}{1 - \delta_{|T \cup T^{k + 1}|}} \hspace{-.5mm} \left(\hspace{-.25mm}1 \hspace{-.5mm}- \hspace{-.5mm} \frac{(1\hspace{-.5mm} - \sigma^{-1})  (1\hspace{-.5mm} - \delta_{|{\Gamma}^k_{1} \cup T^k |})} {1 + \delta_{S}}  \right) \right)^{\hspace{-.5mm}1/2}  \label{eq:alpha'}
\end{eqnarray}
and 
\begin{equation}
\beta' = \left((1 + t^{-1})^{1/2} + 1 \right) \left(1 - \delta_{|T \cup T^{k + 1}|}
\right)^{-1/2}. 
\end{equation}
 Recalling that $t = \frac{1}{6}$ and $\sigma = \frac{1}{2}  \exp\left(\frac{14}{9}\right)$ and also noting that $\delta_{|{\Gamma}^k_{1} \cup T^k |} \leq \delta_{|T \cup T^{k + 1}|} = \delta_{|T^{k + 1}| + |\Gamma^{k + 1}|} \leq  \delta_{Sk + S + \gamma}$ and $\delta_{\gamma} \leq  \delta_{Sk + S + \gamma}$, one can show from \eqref{eq:alpha'} that \begin{equation}
 \alpha' < 1
 \end{equation}
under $\delta_{Sk + S + \gamma} \leq \frac{1}{8}$.
\vspace{1mm}

Now, we consider two cases: 1) $\beta' \|\mathbf{v}\|_2 < (1 - \alpha' ) \| \mathbf{x}_{{\Gamma^{k}}}\|_2$ and 2) $\beta' \|\mathbf{v}\|_2 \geq (1 - \alpha' ) \| \mathbf{x}_{{\Gamma^{k}}}\|_2$. 
First, if $\beta' \|\mathbf{v}\|_2 < (1 - \alpha' ) \| \mathbf{x}_{{\Gamma^{k}}}\|_2$, \eqref{eq:65io} implies \eqref{eq:aasssd10} (i.e., $\| \mathbf{x}_{\Gamma^{k + 1}} \|_2^2 <
\|\mathbf{x}_{\Gamma^{k}} \|_2^2$)
so that \eqref{eq:jjff00} holds true.

Second, if
$\beta' \|\mathbf{v}\|_2 \geq (1 - \alpha' ) \| \mathbf{x}_{{\Gamma^{k}}}\|_2$, then \eqref{eq:42} directly holds true
because
\begin{eqnarray} 
  &&\hspace{-10mm} \big\|\mathbf{r}^{k + \max \left\{\gamma, \left\lfloor\frac{8 \gamma}{S} \right\rfloor \right\}} \big \|_2 \nonumber \\ 
  &&\hspace{-7mm} \overset{(a)}{\leq}  \| \mathbf{r}^{k + 1} \big\|_2 \nonumber \\ 
  &&\hspace{-7mm}\overset{(c)}{=}  \alpha' (1 - \delta_{Sk + S + \gamma})^{1/2} \|\mathbf{x}_{{\Gamma^{k}} } \|_2 + (1 + t^{-1})^{1/2} \|\mathbf{v}\|_2 \nonumber \\
  &&\hspace{-7mm} \leq   \left( \frac{\alpha' \beta' (1 - \delta_{Sk + S + \gamma})^{1/2}}{1 - \alpha'} + (1 + t^{-1})^{1/2}\right) \|\mathbf{v}\|_2 \nonumber \\
  &&\hspace{-7mm} =  
\left(\frac{(1 + t^{-1})^{1/2} + 1}{1 - \alpha'} - 1\right)\|\mathbf{v}\|_2
\nonumber \\
  &&\hspace{-7mm} {\leq} ~ \mu_k \|\mathbf{v}\|_2,
\end{eqnarray}
where (a) is due to \eqref{eq:90} and the fact that the residual power of gOMP is always non-increasing and (b) is from~\eqref{eq:upper0} and~\eqref{eq:alpha'}.
This completes the proof of \eqref{eq:jjff00}.


%
%
%
%
%
%
%


\section{Conclusion} \label{sec:conclusion}

As a method to enhance the recovery performance of orthogonal matching pursuit (OMP), generalized (gOMP) has
received attention in recent years~\cite{liu2012orthogonal,wang2012Generalized,liu2012super,huang2011recovery,maleh2011improved,satpathi2013improving,shen2014analysis,dan2014analysis,li2015sufficient}. While empirical evidence has shown that gOMP is effective in reconstructing sparse signals, theoretical results to date are relatively weak. In this paper, we have presented improved recovery guarantee of gOMP by showing that the gOMP algorithm can perform stable recovery of all
sparse signals from the noisy measurements under the restricted isometry property (RIP) with $\delta_{\max \left\{9, S + 1 \right\}K} \leq \frac{1}{8}$.
The presented proof strategy might be
useful for obtaining improved results for other greedy algorithms
derived from the OMP algorithm.

\appendices

%
%
%
%
%

\section{Proof of Theorem \ref{cor:cor2}} \label{app:cor2}

\begin{proof} We first give the proof of \eqref{eq:10ooo}. Observe that
\begin{eqnarray}
 && \hspace{-7mm} \left\|\mathbf{r}^{\max \left\{K, \left\lfloor\frac{8 K}{S} \right\rfloor \right\}} \right\|_2 \nonumber \\ 
 && \hspace{-7mm} = \left\|\mathbf{y} - \mathbf{\Phi}
\hat{\mathbf{x}}^{\max \left\{K, \left\lfloor\frac{8 K}{S} \right\rfloor \right\}} \right\|_2 \nonumber \\
 && \hspace{-7mm} =  \left\|\mathbf{\Phi} \left(\mathbf{x} - \hat{\mathbf{x}}^{\max \left\{K, \left\lfloor\frac{8 K}{S} \right\rfloor \right\}} \right) + \mathbf{v} \right\|_2 \nonumber
 \\
 && \hspace{-7mm} \geq \left\|\mathbf{\Phi}\left(\mathbf{x} - \hat{\mathbf{x}}^{ \max \left\{K, \left\lfloor\frac{8 K}{S} \right\rfloor \right\}} \right) \right\|_2 -
 \|\mathbf{v}\|_2 \nonumber \\
 && \hspace{-7mm} \overset{(a)}{\geq} \hspace{-.5mm} \left(\hspace{-.5mm}1 \hspace{-.5mm} - \hspace{-.5mm} \delta_{\big|T \cup T^{\max \left\{K, \left\lfloor\frac{8 K}{S} \right\rfloor \right\}} \big|}   \right)^{\hspace{-.5mm}1/2} \hspace{-.5mm} \left\|\mathbf{x} \hspace{-.5mm} - \hspace{-.5mm} \hat{\mathbf{x}}^{\max \left\{K, \left\lfloor\frac{8 K}{S} \right\rfloor \right\}} \hspace{-.5mm} \right\|_2 \hspace{-.5mm} - \hspace{-.5mm}
 \|\mathbf{v}\|_2  \nonumber \\
 && \hspace{-7mm} \overset{(b)}{\geq} \hspace{-.5mm} \left (1 \hspace{-.5mm} -  \hspace{-.5mm}\delta_{\max \left\{9, S + 1 \right\}K} \right)^{1/2} \hspace{-.5mm} \left\|\mathbf{x} \hspace{-.5mm} - \hspace{-.5mm} \hat{\mathbf{x}}^{\max \left\{K, \left\lfloor\frac{8 K}{S} \right\rfloor \right\}} \hspace{-.5mm} \right\|_2 \hspace{-.5mm} - \hspace{-.5mm}
 \|\mathbf{v}\|_2. \nonumber \\
 \label{eq:tf}
\end{eqnarray}
where (a) is from the RIP and (b) is because
\begin{eqnarray}
\big|T \cup T^{\max \left\{K, \left\lfloor\frac{8 K}{S} \right\rfloor \right\}}\big| 
& \leq & |T| + \big|T^{\max \left\{K, \left\lfloor\frac{8 K}{S} \right\rfloor \right\}}\big|  \nonumber \\
& {\leq} & K + \max \left\{K, \left\lfloor\frac{8 K}{S} \right\rfloor \right\} S \nonumber \\ 
& \leq & \max \{9, S + 1 \}K. \nonumber
\end{eqnarray}
Using \eqref{eq:42} and \eqref{eq:tf}, we have
\begin{eqnarray}
 && \hspace{-6.5mm} \big\|\mathbf{x} - \hat{\mathbf{x}}^{\max \left\{K, \left\lfloor\frac{8 K}{S} \right\rfloor \right\}} \big\|_2 \nonumber \\
 && \hspace{-4.5mm}\leq \left (1 - \delta_{\max \left\{9, S + 1 \right\}K} \right)^{-1/2}  \left( \big\|\mathbf{r}^{\max \left\{K, \left\lfloor\frac{8 K}{S} \right\rfloor \right\}} \big\|_2 +  \|\mathbf{v}\|_2 \right) \nonumber \\ 
 && \hspace{-5mm}\overset{(a)}{\leq} \left (1 - \delta_{\max \left\{9, S + 1 \right\}K} \right)^{-1/2} (\mu_0 + 1)\|\mathbf{v}\|_2
 \nonumber \\
 && \hspace{-4.5mm} = ~\mu \|\mathbf{v}\|_2, \label{eq:61}
\end{eqnarray} 
where 
\begin{eqnarray}   
\mu = && \hspace{-6mm} \left(\hspace{-.5mm} 1 \hspace{-.5mm}- \hspace{-.5mm} 2 \left({\frac{{7} (1
\hspace{-.25mm} + \hspace{-.25mm} \delta) \left(1 \hspace{-.25mm} - \hspace{-.25mm} \exp \left(- \frac{14}{9} \right) \right) } {{6} (1 - \delta) \exp \left(\frac{14}{9} \right) }}  \right)^{\hspace{- .5mm}1/2} \right)^{\hspace{-1mm}-1} \hspace{-1mm} \frac{\sqrt 7 \hspace{-.25mm} + \hspace{-.25mm} 1}{(1 - \delta)^{-1/2}} \hspace{-.25mm}  \nonumber \\ 
\label{eq:ck991}
\end{eqnarray}
where $\delta= \delta_{\max \left\{9, S + 1 \right\}K}$ and (a) is because $\delta_{7K} 
\leq \delta_{\max \left\{9, S + 1 \right\}K} \leq \frac{1}{8}$ (see Theorem~\ref{cor:cor2}).
\vspace{1mm}

Now, we turn to the proof of \eqref{eq:10ooo1}. 
Using the best $K$-term approximation $\left (\hat{\mathbf{x}}^{\max \left\{K, \left\lfloor\frac{8 K}{S} \right\rfloor \right\}} \right )_K$ of $\hat{\mathbf{x}}^{\max \left\{K, \left\lfloor\frac{8 K}{S} \right\rfloor \right\}}$,
we have
\begin{eqnarray}
\hspace{-5mm}&&\hspace{-7mm} \left \| \left( \hat{\mathbf{x}}^{\max \left\{K, \left\lfloor\frac{8 K}{S} \right\rfloor \right\}} \right)_K - \mathbf{x} \right\|_2 
\nonumber \\
&&\hspace{-7mm} = \left \| \hspace{-.5mm} \left ( \hat{\mathbf{x}}^{\max \left\{K, \left\lfloor\frac{8 K}{S} \right\rfloor \right\}} \hspace{-.5mm} \right )_K \hspace{-2mm} - \hspace{-.5mm} \hat{\mathbf{x}}^{ \max \left\{K, \left\lfloor\frac{8 K}{S} \right\rfloor \right\}} \hspace{-1mm} + \hspace{-.5mm} \hat{\mathbf{x}}^{ \max \left\{K, \left\lfloor\frac{8 K}{S} \right\rfloor \right\}} \hspace{-.5mm} - \hspace{-.5mm} \mathbf{x}  \right\|_2  
\nonumber \\
&&\hspace{-7mm} \overset{(a)}{\leq}  \left \| \left(\hat{\mathbf{x}}^{ \max \left\{K, \left\lfloor\frac{8 K}{S} \right\rfloor \right\}} \right)_K - \hat{\mathbf{x}}^{ \max \left\{K, \left\lfloor\frac{8 K}{S} \right\rfloor \right\}} \right\|_2  \nonumber \\
&& \hspace{-2mm} + \left \|\hat{\mathbf{x}}^{ \max \left\{K, \left\lfloor\frac{8 K}{S} \right\rfloor \right\}} - \mathbf{x} \right\|_2   \nonumber \\
&&\hspace{-7mm} \overset{(b)}{\leq}  2 \left \|\hat{\mathbf{x}}^{ \max \left\{K, \left\lfloor\frac{8 K}{S} \right\rfloor \right\}} - \mathbf{x} \right\|_2 \nonumber \\
&&\hspace{-6.5mm} \leq 2 \mu \|\mathbf{v} \|_2,  \label{eq:zuuuo}
\end{eqnarray} 
where (a) is from the triangle inequality and (b) is because $\left(\hat{\mathbf{x}}^{ \max \left\{K, \left\lfloor\frac{8 K}{S} \right\rfloor \right\}} \right)_K$ is the best $K$-term approximation to $\hat{\mathbf{x}}^{ \max \left\{K, \left\lfloor\frac{8 K}{S} \right\rfloor \right\}}$ and hence is a better approximation than $\mathbf{x}$ (note that both $\left(\hat{\mathbf{x}}^{ \max \left\{K, \left\lfloor\frac{8 K}{S} \right\rfloor \right\}} \right)_K$ and $\mathbf{x}$ are  $K$-sparse).

On the other hand, 
\begin{eqnarray}
&& \hspace{-7mm} \left\| \left(\hat{\mathbf{x}}^{ \max \left\{K, \left\lfloor\frac{8 K}{S} \right\rfloor \right\}} \right)_K - \mathbf{x} \right\|_2 \nonumber \\
&& \hspace{-7mm} \overset{(a)}{\geq} (1 - \delta_{2K})^{-1/2}  \left \|\mathbf{\Phi}  \left (  \left (\hat{\mathbf{x}}^{ \max \left\{K, \left\lfloor\frac{8 K}{S} \right\rfloor \right\}}  \right)_K - \mathbf{x} \right) \right \|_2  \nonumber \\
&& \hspace{-6.6mm}= (1 - \delta_{2K})^{-1/2}  \left\|\mathbf{\Phi}  \left(\hat{\mathbf{x}}^{ \max \left\{K, \left\lfloor\frac{8 K}{S} \right\rfloor \right\}}  \right)_K - \mathbf{y} + \mathbf{v} \right\|_2  \nonumber \\
&& \hspace{-7mm}\overset{(b)}{\geq}  (1 - \delta_{2K})^{-1/2} \left( \left\|\mathbf{\Phi} \left(\hat{\mathbf{x}}^{ \max \left\{K, \left\lfloor\frac{8 K}{S} \right\rfloor \right\}} \right)_K - \mathbf{y} \right\|_2 - \|\mathbf{v}\|_2  \right) \nonumber \\
&& \hspace{-7mm}\overset{(c)}{\geq}  (1 - \delta_{2K})^{-1/2} \left( \|\mathbf{\Phi} \hat{\mathbf{x}} - \mathbf{y}\|_2 - \|\mathbf{v}\|_2  \right) \nonumber \\
&& \hspace{-6.6mm} =  (1 - \delta_{2K})^{-1/2} \left( \|\mathbf{\Phi} (\hat{\mathbf{x}} - \mathbf{x}) - \mathbf{v}\|_2 - \|\mathbf{v}\|_2  \right) \nonumber \\
&& \hspace{-7mm}\overset{(d)}{\geq}  (1 - \delta_{2K})^{-1/2} \left( \|\mathbf{\Phi} (\hat{\mathbf{x}} - \mathbf{x})\|_2  - 2 \|\mathbf{v}\|_2  \right) \nonumber \\
&& \hspace{-7mm}\overset{(e)}{\geq}  (1 - \delta_{2K})^{-1/2} \left( (1 + \delta_{2K})^{1/2} \|\hat{\mathbf{x}} - \mathbf{x}\|_2 - 2 \|\mathbf{v}\|_2  \right) \nonumber \\
&& \hspace{-6.5mm} {\geq}~  (1 - \delta_{\max \left\{9, S + 1 \right\}K})^{-1/2} \nonumber \\
&& \hspace{-3mm} \times \left( (1 + \delta_{\max \left\{9, S + 1 \right\}K})^{1/2} \|\hat{\mathbf{x}} - \mathbf{x}\|_2 - 2 \|\mathbf{v}\|_2  \right),   \label{eq:79eq}
\end{eqnarray}
where (a) is from the RIP, (b) and (d) are from the triangle inequality, (c) is because $\left(\hat{\mathbf{x}}^{ \max \left\{K, \left\lfloor\frac{8 K}{S} \right\rfloor \right\}} \right)_K$ is supported on $\hat{T}$ and $$\hat{\mathbf{x}}_{\hat{{T}}} = \mathbf{\Phi}^\dag_{\hat{{T}}} \mathbf{y} = \arg
\underset{\mathbf{u}}{ \min} \|\mathbf{y} - \mathbf{\Phi}_{\hat{{T}}} \mathbf{u}\|_2,$$ and (e) follows from the RIP. 

Combining \eqref{eq:zuuuo} and \eqref{eq:79eq} yields  
\begin{eqnarray}
&& \hspace{-9mm} (1 - \delta_{\max \left\{9, S + 1 \right\}K})^{-1/2} \nonumber \\
&& \hspace{-9mm} \times \left( (1 + \delta_{\max \left\{9, S + 1 \right\}K})^{1/2} \|\hat{\mathbf{x}} - \mathbf{x}\|_2 - 2 \|\mathbf{v}\|_2  \right) \leq 2 \mu \|\mathbf{v} \|_2. \nonumber
\end{eqnarray}
That is, 
\begin{equation}
\|\hat{\mathbf{x}} - \mathbf{x}\|_2 \leq C \|\mathbf{v}\|_2,
\end{equation}
where \begin{eqnarray} \label{eq:Cvalue}
C \hspace{-2.5mm}&=&\hspace{-2.5mm} 2\left( \frac{1 \hspace{-.5mm} + \hspace{-.5mm} \delta_{\max \left\{9, S + 1 \right\}K}}{1 \hspace{-.5mm} - \hspace{-.5mm} \delta_{\max \left\{9, S + 1 \right\}K}} \right)^{\hspace{-.5mm}1/2} \hspace{-2.5mm} \mu \hspace{-.5mm} + \hspace{-.5mm} 2\left({1\hspace{-.5mm} - \hspace{-.5mm} \delta_{\max \left\{9, S + 1 \right\}K}} \right)^{-1/2} \nonumber \\
\hspace{-2.5mm}&=&\hspace{-2.5mm}  \frac{2 (1 + \delta)^{1/2}}{1 - \delta}  \left(\hspace{-.5mm} 1 \hspace{-.5mm}- \hspace{-.5mm} 2 \left({\frac{{7} (1
\hspace{-.25mm} + \hspace{-.25mm} \delta) \left(1 \hspace{-.25mm} - \hspace{-.25mm} \exp \left(- \frac{14}{9} \right) \right) } {{6} (1 - \delta) \exp \left(\frac{14}{9} \right) }}  \right)^{\hspace{- .5mm}1/2} \right)^{\hspace{-1mm}-1}  \nonumber \\
&&\hspace{-2mm} \times\hspace{-.25mm} (\sqrt 7\hspace{-.25mm} +\hspace{-.25mm} 1)\hspace{-.25mm} + \hspace{-.25mm} 2\left({1 \hspace{-.25mm} - \hspace{-.25mm} \delta} \right)^{-1/2}
\end{eqnarray}
where $\delta= \delta_{\max \left\{9, S + 1 \right\}K}$,
which completes the proof.

\end{proof}

%
%
%
%

\section{Proof of \eqref{eq:next22}} \label{app:relations}
\begin{proof}
Recall from \eqref{eq:mu1} that \begin{equation}
\|\mathbf{x}_{{\Gamma^{k}}
\backslash {\Gamma}^k_{L-2}}\|_2^2 < \sigma
 \|\mathbf{x}_{{\Gamma^{k}} \backslash {\Gamma}^k_{L-1}}\|_2^2.
\end{equation} 
Subtracting both sides by $\|\mathbf{x}_{{\Gamma^{k}} \backslash {\Gamma}^k_{L-1}}\|_2^2$, we have
\begin{equation} \label{eq:mu2}
\|\mathbf{x}_{{\Gamma}^k_{{L}- 1} \backslash {\Gamma}^k_{{L}-
2}}\|_2^2 < (\sigma  - 1) \|\mathbf{x}_{{\Gamma^{k}} \backslash
{\Gamma}^k_{{L}- 1}}\|_2^2.
\end{equation}
Since $|x_1| \geq |x_2| \geq \cdots \geq |x_{|\Gamma^{k}|}|$ and also noting that $\Gamma^{k} \backslash {\Gamma}^k_{{L}- 1} = \left\{2^{L - 2} + 1, \cdots, |\Gamma^{k}|\right\}$ (see \eqref{eq:jjjjffff}), the elements of $\mathbf{x}_{\Gamma^{k} \backslash {\Gamma}^k_{{L}- 1}}$
are $|\Gamma^{k}| - 2^{{L}- 2} S$ smallest ones (in magnitude) of the vector $\mathbf{x}_{\Gamma^{k}}$.
Furthermore, since $\sigma  - 1 \geq 1$, \eqref{eq:mu2} is equivalent to 
\begin{equation} \label{eq:lager}
  |{\Gamma}^k_{{L}- 1} \backslash {\Gamma}^k_{{L}- 2}| < (\sigma  - 1) (|\Gamma^{k}| - 2^{{L}- 2} S).
\end{equation}
Now we consider two cases. First, when $L = 2$, one can rewrite \eqref{eq:lager} as
\begin{equation}
|\Gamma^k_{L - 1}| < (\sigma - 1) (|\Gamma^k| - S),
\end{equation}
and hence 
\begin{equation}
|\Gamma^k| > \left(\frac{\sigma}{\sigma - 1}\right)S. 
\end{equation}
Second, when $L \geq 3$, \eqref{eq:lager} becomes
\begin{equation}
  2^{{L}- 3} S < (\sigma  - 1) ({|\Gamma^{k}|} - 2^{{L}- 2} S).
\end{equation}
Equivalently, 
\begin{equation}
|\Gamma^k| > \left(\frac{2\sigma - 1}{2\sigma - 2}\right) 2^{L - 2} S. 
\end{equation}
Combining these two cases yields the desired result.   
\end{proof}

\section{Proof of Proposition \ref{prop:rk}} \label{app:prop1}
\begin{proof}
We first consider the proof of \eqref{eq:residual3}. 
$(T^{k} \cap T) \subseteq T^{k}$ implies that
\begin{equation} \label{eq:126eq}
\|\mathbf{r}^{k}\|_2^2 = \|\mathcal{P}_{T^{k}}^{\bot} \mathbf{y}
\|_2^2 \leq \|\mathcal{P}_{T^{k} \cap T}^{\bot} \mathbf{y} \|_2^2.
\end{equation}
Also, noting that $\mathcal{P}_{T^{k} \cap T}^{\bot} \mathbf{y}$ is
the projection of $\mathbf{y}$ onto the orthogonal complement of
$span(\mathbf{\Phi}_{T^{k} \cap T})$,
\begin{equation} \label{eq:127eq}
\|\mathcal{P}_{T^{k} \cap T}^{\bot} \mathbf{y} \|_2^2 =
\min_{supp(\mathbf{z}) = T^{k} \cap T} {\|\mathbf{y}-\mathbf{\Phi}
\mathbf{z}\|}_2^2.
\end{equation}
From \eqref{eq:126eq} and \eqref{eq:127eq}, we have
\begin{eqnarray}
\| \mathbf{r}^{k} \|_2^2 &\leq&{\|\mathbf{y}-\mathbf{\Phi}_{T^{k}
\cap T}
\mathbf{x}_{T^{k} \cap T} \|}_2^2  \nonumber \\
&=& {\|\mathbf{\Phi}_T \mathbf{x}_T + \mathbf{v} - \mathbf{\Phi}_
{T^{k} \cap T}
\mathbf{x}_{T^{k} \cap T} \|}_2^2 \label{eq:ggsssss11533}  \nonumber \\
&=& {\|\mathbf{\Phi}_{\Gamma^{k}} \mathbf{x}_{\Gamma^{k}} +
\mathbf{v}\|}_2^2, \label{eq:ggsssss115}
\end{eqnarray}
where \eqref{eq:ggsssss115} is from $T \backslash (T^{k} \cap T) = T
\backslash T^{k} = \Gamma^{k}$.

Now, we turn to the proof of \eqref{eq:residual31}. The proof consists of two steps. First, we will show that the residual power difference of the gOMP satisfies
\begin{equation} \label{eq:residual_A1}
   \|\mathbf{r}^l\|_2^2 - \|\mathbf{r}^{l + 1}\|_2^2 \geq \frac{1}{1 + \delta_{S}} \|\mathbf{\Phi}'_{\Lambda^{k + 1}}
   \mathbf{r}^l\|_2^2.
\end{equation}
Second, we will show that
\begin{equation} \label{eq:nalia}
   \|\mathbf{\Phi}'_{\Lambda^{l + 1}} \mathbf{r}^l\|_2^2 \geq \frac{1 - \delta_{|{\Gamma}^k_{\tau} \cup T^l |}} {\left\lceil \frac{|{\Gamma}^k_\tau |}{S} \right\rceil}\left( \|\mathbf{r}^l \|_2^2 - \|\mathbf{\Phi}_{{\Gamma^{k}} \backslash {\Gamma}^k_\tau} \mathbf{x}_{{\Gamma^{k}} \backslash {\Gamma}^k_\tau} + \mathbf{v}\|_2^2
   \right).
\end{equation}
\eqref{eq:residual31} is established by combining \eqref{eq:residual_A1} and
\eqref{eq:nalia}.

\vspace{2mm}
\begin{itemize}
  \item {\bf Proof of \eqref{eq:residual_A1}}:

  Recall that the gOMP algorithm orthogonalizes the measurements
$\mathbf{y}$ against previously chosen columns of $\mathbf{\Phi}$,
yielding the updated residual in each iteration. That is,
\begin{equation}
  \mathbf{r}^{l + 1} = \mathcal{P}^\bot_{T^{l + 1}} \mathbf{y}.
\end{equation}
Since $ \mathbf{r}^l = \mathbf{y} - \mathbf{\Phi}
\hat{\mathbf{x}}^l$, we have
\begin{eqnarray}
  \mathbf{r}^{l + 1} = \mathcal{P}^\bot_{T^{k + 1}} (\mathbf{r}^l + \mathbf{\Phi} \hat{\mathbf{x}}^l) = \mathcal{P}^\bot_{T^{l + 1}} \mathbf{r}^l. \label{eq:ok}
\end{eqnarray}
where \eqref{eq:ok} is because $\mathbf{\Phi} \hat{\mathbf{x}}^l \in
span(\mathbf{\Phi}_{T^{l}})$ and $T^l \subset T^{l + 1}$ and hence
$\mathcal{P}^\bot_{T^{l + 1}} \mathbf{\Phi} \hat{\mathbf{x}}^l =
\mathbf{0}$. As a result,
\begin{equation}
  \mathbf{r}^l - \mathbf{r}^{l + 1} = \mathbf{r}^l - \mathcal{P}^\bot_{T^{l + 1}}
  \mathbf{r}^l = \mathcal{P}_{T^{l + 1}} \mathbf{r}^l.
\end{equation}
Noting that $\Lambda^{l + 1} \subseteq T^{l + 1}$, we have
\begin{eqnarray} \label{eq:residual_A}
   \|\mathbf{r}^l - \mathbf{r}^{l + 1}\|_2 = \|\mathcal{P}_{T^{l + 1}}
  \mathbf{r}^l\|_2 \geq \|\mathcal{P}_{\Lambda^{l + 1}} \mathbf{r}^l\|_2.
\end{eqnarray}
Since $\mathcal{P}_{\Lambda^{l + 1}} = \mathcal{P}'_{\Lambda^{l +
1}} = ( \mathbf{\Phi}_{\Lambda^{l + 1}}^\dag)'
\mathbf{\Phi}'_{\Lambda^{l + 1}}$, we further have
\begin{eqnarray}
   \|\mathbf{r}^l - \mathbf{r}^{l + 1}\|_2 &\geq& \|(\mathbf{\Phi}_{\Lambda^{l + 1}}^\dag)' \mathbf{\Phi}'_{\Lambda^{l + 1}} \mathbf{r}^l\|_2 \nonumber \\
&\geq& (1 + \delta_{S})^{-1/2} \|\mathbf{\Phi}'_{\Lambda^{l + 1}}
\mathbf{r}^l\|_2 \label{eq:mmmss}
\end{eqnarray}
where \eqref{eq:mmmss} is because the singular values
of $\mathbf{\Phi}_{\Lambda^{l + 1}}$ lie between $(1 -
\delta_{S})^{1/2}$ and $(1 + \delta_{S})^{1/2}$ and hence the
smallest singular value of $\mathbf{\Phi}_{\Lambda^{l + 1}}^\dag$ is
lower bounded by $(1 + \delta_{S})^{-1/2}$.\footnote{Suppose the
matrix $\mathbf{\Phi}_{\Lambda^{l + 1}}$ has singular value
decomposition $\mathbf{\Phi}_{\Lambda^{l + 1}} = \mathbf{U}
\mathbf{\Sigma} \mathbf{V}'$, then $\mathbf{\Phi}_{\Lambda^{l +
1}}^\dag = \mathbf{V} \mathbf{\Sigma}^\dag \mathbf{U}'$ where
$\mathbf{\Sigma}^\dag$ is the pseudoinverse of $\mathbf{\Sigma}$,
which is formed by replacing every non-zero diagonal entry by its
reciprocal and transposing the resulting matrix.}

\vspace{2mm}

\item {\bf Proof of \eqref{eq:nalia}}:

We first introduce a lemma useful in our proof.

\vspace{1mm}
\begin{lemma} \label{lem:good}
 Let $\mathbf{u}, \mathbf{z} \in \mathcal{R}^{n}$ be two distinct
vectors and let $W = supp(\mathbf{u}) \cap supp(\mathbf{z})$. Also,
let $U$ be the set of $S$ indices corresponding to $S$ most
significant elements
  in $\mathbf{u}$. Then for any integer $S \geq 1$,
  \begin{equation}
    \langle \mathbf{u}, \mathbf{z} \rangle \leq \left( \left\lceil \frac{|W|}{S} \right\rceil\right)^{1/2} \|\mathbf{u}_{U}\|_2 \|\mathbf{z}_W\|_2.
  \end{equation}
\end{lemma}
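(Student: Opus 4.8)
The plan is to first reduce the claim to a norm inequality involving $\mathbf{u}$ alone, and then prove that inequality by a block decomposition of the overlap set $W$. First I would observe that the summand $u_i z_i$ vanishes whenever $i \notin supp(\mathbf{u}) \cap supp(\mathbf{z}) = W$, so the inner product collapses to $\langle \mathbf{u}, \mathbf{z}\rangle = \langle \mathbf{u}_W, \mathbf{z}_W\rangle$. The Cauchy--Schwarz inequality then gives $\langle \mathbf{u}, \mathbf{z}\rangle \le \|\mathbf{u}_W\|_2 \, \|\mathbf{z}_W\|_2$ (this already dominates the genuinely useful case, and when $\langle \mathbf{u}, \mathbf{z}\rangle$ is negative the stated bound is trivial). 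Comparing with the target, it suffices to establish the deterministic bound
\[
\|\mathbf{u}_W\|_2 \le \left(\left\lceil \tfrac{|W|}{S}\right\rceil\right)^{1/2} \|\mathbf{u}_U\|_2,
\]
which no longer involves $\mathbf{z}$.

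To prove this bound I would sort the indices of $W$ in order of decreasing magnitude of the corresponding entries of $\mathbf{u}$ and partition them into consecutive blocks $G_1, G_2, \dots, G_p$ of size $S$ (the last block possibly smaller), where $p = \lceil |W|/S\rceil$. I would then verify two facts. First, $\|\mathbf{u}_{G_1}\|_2 \le \|\mathbf{u}_U\|_2$: since $G_1$ is an index set of cardinality at most $S$ and $U$ collects the $S$ entries of largest magnitude in all of $\mathbf{u}$, the set $U$ maximizes the sum of squared magnitudes over index sets of size at most $S$. Second, $\|\mathbf{u}_{G_j}\|_2 \le \|\mathbf{u}_{G_{j-1}}\|_2$ for every $j \ge 2$: letting $m_{j-1}$ denote the smallest magnitude occurring in $G_{j-1}$, every entry indexed by $G_j$ has magnitude at most $m_{j-1}$, while $G_{j-1}$ contains exactly $S$ entries of magnitude at least $m_{j-1}$, so $\|\mathbf{u}_{G_j}\|_2^2 \le S\,m_{j-1}^2 \le \|\mathbf{u}_{G_{j-1}}\|_2^2$.

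Combining the two facts yields $\|\mathbf{u}_{G_j}\|_2 \le \|\mathbf{u}_U\|_2$ for all $j$, so summing the block norms gives $\|\mathbf{u}_W\|_2^2 = \sum_{j=1}^p \|\mathbf{u}_{G_j}\|_2^2 \le p\,\|\mathbf{u}_U\|_2^2 = \lceil |W|/S\rceil \,\|\mathbf{u}_U\|_2^2$, and taking square roots closes the argument. I expect the main obstacle to lie in the bookkeeping for the monotonicity step: one must use that each full block carries \emph{exactly} $S$ entries in order to chain $\|\mathbf{u}_{G_j}\|_2 \le \|\mathbf{u}_{G_{j-1}}\|_2$ down from the top block, and one must check that the possibly undersized final block still satisfies $\|\mathbf{u}_{G_p}\|_2^2 \le |G_p|\,m_{p-1}^2 \le S\,m_{p-1}^2 \le \|\mathbf{u}_{G_{p-1}}\|_2^2$ so that the telescoping comparison is not broken. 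Everything else reduces to Cauchy--Schwarz and the extremal characterization of $U$.
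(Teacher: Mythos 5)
Your proof is correct, and it takes a genuinely (if modestly) different route from the paper's. Both arguments rest on the same decomposition: order the indices of $W$ by decreasing magnitude of the corresponding entries of $\mathbf{u}$ and split them into $\lceil |W|/S \rceil$ consecutive blocks of size $S$, the last possibly smaller. The paper, however, never forms $\|\mathbf{u}_W\|_2$: it applies Cauchy--Schwarz block by block to get $\langle \mathbf{u}, \mathbf{z}\rangle \le \sum_i \|\mathbf{u}_{W_i}\|_2 \|\mathbf{z}_{W_i}\|_2$, bounds every block norm $\|\mathbf{u}_{W_i}\|_2$ by $\|\mathbf{u}_U\|_2$ (asserting that the top block dominates, which is exactly the monotonicity observation you spell out), and then extracts the factor $\left(\lceil |W|/S\rceil\right)^{1/2}$ from the $\ell_1$--$\ell_2$ inequality $\sum_i \|\mathbf{z}_{W_i}\|_2 \le \left(\lceil |W|/S\rceil\right)^{1/2} \|\mathbf{z}_W\|_2$ applied to the block norms of $\mathbf{z}$. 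You instead apply Cauchy--Schwarz once, globally, and shift the entire burden onto $\mathbf{u}$ by proving the $\mathbf{z}$-free inequality $\|\mathbf{u}_W\|_2 \le \left(\lceil |W|/S\rceil\right)^{1/2} \|\mathbf{u}_U\|_2$ via the chained comparison $\|\mathbf{u}_{G_j}\|_2 \le \|\mathbf{u}_{G_{j-1}}\|_2 \le \cdots \le \|\mathbf{u}_{G_1}\|_2 \le \|\mathbf{u}_U\|_2$; so the factor $\left(\lceil |W|/S\rceil\right)^{1/2}$ lands on the $\mathbf{u}$ side in your proof and on the $\mathbf{z}$ side in the paper's. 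Your version buys a cleaner modular structure: the intermediate inequality is a reusable, self-contained statement about best $S$-term approximations, and it handles the cases $S \le |W|$ and $S > |W|$ uniformly, whereas the paper treats $S > |W|$ as a separate (trivial) case. The price is the extra bookkeeping you correctly flag --- the chain telescopes only because every block except possibly the last has exactly $S$ elements --- while the paper needs only that the first block dominates all others, not the full chain; the underlying combinatorial fact is the same in both proofs.
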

\begin{proof}
See Appendix \ref{app:lemmaD}.
\end{proof}

\vspace{1mm}
Now we are ready to prove \eqref{eq:nalia}. Let $\mathbf{u} =
\mathbf{\Phi}' \mathbf{r}^l$ and let $\mathbf{z} \in \mathcal{R}^n$
be the vector satisfying $\mathbf{z}_{T \cap T^{k} \cup
{\Gamma}^k_\tau} = \mathbf{x}_{T \cap T^{k} \cup {\Gamma}^k_\tau}$
and $\mathbf{z}_{\Omega \backslash (T \cap T^{k} \cup
{\Gamma}^k_\tau)} = \mathbf{0}$.  Since $supp(\mathbf{u}) = \Omega
\backslash T^l$ and $supp(\mathbf{z}) = T \cap T^{k} \cup
{\Gamma}^k_\tau$ and also noting that $T^k \subseteq T^l$, we have
$W = supp(\mathbf{u}) \cap supp(\mathbf{z}) = {\Gamma}^k_\tau
\backslash T^l$. Moreover, since $\Lambda^{l + 1}$ contains the
indices corresponding to $S$ most significant elements in
$\mathbf{u} = \mathbf{\Phi}' \mathbf{r}^l$, we have $U = \Lambda^{l
+ 1}$.   Using Lemma \ref{lem:good},
  \begin{eqnarray}
 \langle \mathbf{\Phi}' \mathbf{r}^l, \mathbf{z} \rangle
&\leq& \left(\left\lceil \frac{|{\Gamma}^k_\tau \backslash T^l|}{S} \right\rceil\right)^{1/2}
\|\mathbf{\Phi}_{\Lambda^{l + 1}}' \mathbf{r}^l \|_2
\|\mathbf{z}_{{\Gamma}^k_\tau \backslash T^l}\|_2 \nonumber \\
&\leq& \left(\left\lceil \frac{|{\Gamma}^k_\tau |}{S} \right\rceil\right)^{1/2}
\|\mathbf{\Phi}_{\Lambda^{l + 1}}' \mathbf{r}^l \|_2
\|\mathbf{z}_{{\Gamma}^k_\tau \backslash T^l}\|_2 \nonumber \\
&\leq& \left(\left\lceil \frac{|{\Gamma}^k_\tau |}{S} \right\rceil\right)^{1/2}
\|\mathbf{\Phi}_{\Lambda^{l + 1}}' \mathbf{r}^l \|_2
\|\mathbf{z}_{\Omega \backslash T^l}\|_2. \label{eq:youmyou}
  \end{eqnarray}

On the other hand,
\begin{eqnarray}
~&&\hspace{-8.5mm} \langle \mathbf{\Phi}' \mathbf{r}^l, \mathbf{z} \rangle \nonumber \\
&&\hspace{-7mm} = 
\langle \mathbf{\Phi}' \mathbf{r}^l, \mathbf{z} - \hat{\mathbf{x}}^l
\rangle + \langle
\mathbf{\Phi}' \mathbf{r}^l, \hat{\mathbf{x}}^l \rangle \nonumber \\
&&\hspace{-7mm} \overset{(a)}{=} \langle \mathbf{\Phi}' \mathbf{r}^l,
\mathbf{z} - \hat{\mathbf{x}}^l \rangle \nonumber \\
&&\hspace{-7mm} = \langle \mathbf{\Phi}(\mathbf{z} - \hat{\mathbf{x}}^l), \mathbf{r}^l \rangle  \nonumber \\
&&\hspace{-7mm} \overset{(b)}{=} \frac{1}{2}\left(\|\mathbf{\Phi}(\mathbf{z} -
\hat{\mathbf{x}}^l)\|_2^2  + \|\mathbf{r}^l\|_2^2 - \|\mathbf{r}^l -
\mathbf{\Phi}(\mathbf{z} - \hat{\mathbf{x}}^l)\|_2^2
  \right) \nonumber \\
&&\hspace{-7mm} \overset{(c)}{=} \frac{1}{2}\left(\|\mathbf{\Phi}(\mathbf{z} - \hat{\mathbf{x}}^l)\|_2^2  + \|\mathbf{r}^l\|_2^2 - \|\mathbf{\Phi}(\mathbf{x} - \mathbf{z}) + \mathbf{v}\|_2^2   \right)  \nonumber \\
&&\hspace{-7mm} = \frac{1}{2}\left(\|\mathbf{\Phi}(\mathbf{z} -
\hat{\mathbf{x}}^l)\|_2^2 + \|\mathbf{r}^l\|_2^2 -
\|\mathbf{\Phi}_{{\Gamma^{k}}\backslash {\Gamma}^k_\tau}
\mathbf{x}_{{\Gamma^{k}} \backslash {\Gamma}^k_\tau} +
\mathbf{v}\|_2^2 \right) 
\nonumber \\
&&\hspace{-7mm} \overset{(d)}{\geq} \|\mathbf{\Phi}(\mathbf{z} -
\hat{\mathbf{x}}^l) \|_2 \left({\|\mathbf{r}^l \|_2^2 -
\|\mathbf{\Phi}_{{\Gamma^{k}} \backslash {\Gamma}^k_\tau}
\mathbf{x}_{{\Gamma^{k}} \backslash {\Gamma}^k_\tau} +
\mathbf{v}\|_2^2}\right)^{1/2}
\nonumber \\
&&\hspace{-7mm} \overset{(e)}{\geq} (1 -
  \delta_{|{\Gamma}^k_{\tau} \cup T^l|})^{1/2} ~ \|\mathbf{z} -
  \hat{\mathbf{x}}^l\|_2  \nonumber \\
  &&\hspace{-2mm} \times \left({\|\mathbf{r}^l \|_2^2 -
\|\mathbf{\Phi}_{{\Gamma^{k}} \backslash {\Gamma}^k_\tau}
\mathbf{x}_{{\Gamma^{k}} \backslash {\Gamma}^k_\tau} +
\mathbf{v}\|_2^2}\right)^{1/2}
\nonumber \\
&&\hspace{-7mm} \geq (1 - \delta_{|{\Gamma}^k_{\tau} \cup T^l|})^{1/2} ~
\|(\mathbf{z} - \hat{\mathbf{x}}^l)_{\Omega \backslash
  T^l}\|_2  \nonumber \\
  &&\hspace{-2mm} \times \left({\|\mathbf{r}^l \|_2^2 -
\|\mathbf{\Phi}_{{\Gamma^{k}} \backslash {\Gamma}^k_\tau}
\mathbf{x}_{{\Gamma^{k}} \backslash {\Gamma}^k_\tau} +
\mathbf{v}\|_2^2}\right)^{1/2} \nonumber \\
&&\hspace{-7mm} \overset{(f)}{\geq} (1 - \delta_{|{\Gamma}^k_{\tau} \cup
T^l|})^{1/2}  ~   \|\mathbf{z}_{\Omega \backslash T^l}\|_2  \nonumber \\
  &&\hspace{-2mm} \times \left({\|\mathbf{r}^l \|_2^2 -
\|\mathbf{\Phi}_{{\Gamma^{k}} \backslash {\Gamma}^k_\tau}
\mathbf{x}_{{\Gamma^{k}} \backslash {\Gamma}^k_\tau} +
\mathbf{v}\|_2^2}\right)^{1/2},   \label{eq:zuobianyou}
\end{eqnarray}
where (a) is because $supp(\hat{\mathbf{x}}^l) = T^l$ and
$supp(\mathbf{\Phi}' \mathbf{r}^l) = \Omega \backslash T^l$ and
hence $\langle \mathbf{\Phi}' \mathbf{r}^l, \hat{\mathbf{x}}^l
\rangle = \mathbf{0}$, (b) uses the fact that $\langle \mathbf{u}, \mathbf{v}\rangle = \frac{1}{2} (\|\mathbf{u}\|_2^2 + \|\mathbf{v}\|_2^2 - \|\mathbf{u} - \mathbf{v}\|_2^2)$, (c) is from $\mathbf{r}^l + \mathbf{\Phi}
\hat{\mathbf{x}}^l = \mathbf{y} = \mathbf{\Phi} \mathbf{x} +
\mathbf{v}$, (d) uses the inequality $\frac{1}{2}(a + b) \geq
\sqrt{ab}$ (with $a = \|\mathbf{\Phi}(\mathbf{z} -
\hat{\mathbf{x}}^l)\|_2^2$ and $b = \|\mathbf{r}^l\|_2^2 -
\|\mathbf{\Phi}_{{\Gamma^{k}}\backslash {\Gamma}^k_\tau}
\mathbf{x}_{{\Gamma^{k}} \backslash {\Gamma}^k_\tau} +
\mathbf{v}\|_2^2$),\footnote{Note that we only need to consider the
case $\|\mathbf{r}^l\|_2^2 - \|\mathbf{\Phi}_{{\Gamma^{k}}
\backslash {\Gamma}^k_\tau} \mathbf{x}_{{\Gamma^{k}} \backslash
{\Gamma}^k_\tau} + \mathbf{v}\|_2^2 \geq 0$. For the alternative
case $\|\mathbf{r}^l\|_2^2 - \|\mathbf{\Phi}_{{\Gamma^{k}}
\backslash {\Gamma}^k_\tau} \mathbf{x}_{{\Gamma^{k}} \backslash
{\Gamma}^k_\tau} + \mathbf{v}\|_2^2 < 0$, \eqref{eq:nalia} directly
holds true since $\|\mathbf{\Phi}'_{\Lambda^{l + 1}}
\mathbf{r}^l\|_2^2 \geq 0$.} (e) is from the RIP ($\|\mathbf{z} -
\hat{\mathbf{x}}^l\|_0 = |{\Gamma}^k_{\tau} \cup T^l|$), and (e) is
due to $(\hat{\mathbf{x}}^l)_{\Omega \backslash T^l} = \mathbf{0}$.
\vspace{2mm}
Finally, using \eqref{eq:youmyou} and \eqref{eq:zuobianyou}, we have
\begin{eqnarray}
\lefteqn{\|\mathbf{\Phi}'_{\Lambda^{l + 1}} \mathbf{r}^l\|_2}
\nonumber \\
&& \hspace{-6mm} \geq \left(\hspace{-.5mm} {\frac{1 - \delta_{|{\Gamma}^k_{\tau}
\cup T^l |}} {\left\lceil \frac{|{\Gamma}^k_\tau |}{S} \right\rceil}}
\left(\|\mathbf{r}^l \|_2^2 - \|\mathbf{\Phi}_{{\Gamma^{k}}
\backslash {\Gamma}^k_\tau} \mathbf{x}_{{\Gamma^{k}} \backslash
{\Gamma}^k_\tau} + \mathbf{v}\|_2^2 \right)\hspace{-.5mm} \right)^{\hspace{-1mm}1/2}, \nonumber
\end{eqnarray}
which is the desired result.
\end{itemize}

 \end{proof}

\section{Proof of Lemma \ref{lem:good}} \label{app:lemmaD}

\begin{proof}
We consider two cases: 1) $1 \leq S \leq |W|$ and 2) $S > |W|$. 

We first consider the case $1 \leq S \leq |W|$. Without loss of
generality, we assume that $W = \left\{1, 2, \cdots, |W|\right\}$ and that the
elements of $\mathbf{u}_W$ are arranged in descending order of
their magnitudes. We define the subset $W_i$ of $W$ as
\begin{equation}
 W_i \hspace{-.5mm}= \hspace{-.5mm}
\begin{cases}
\left\{S(i - 1) + 1, \cdots, Si \right\} &\hspace{-1.5mm} i = 1, \cdots\hspace{-.5mm},\hspace{-.25mm} \left\lceil \frac{|W|}{S} \right\rceil\hspace{-.5mm} - \hspace{-.5mm}1, \\
\left\{S \left(\left\lceil \frac{|W|}{S} \right\rceil - 1\right) + 1, \cdots\hspace{-.5mm},\hspace{-.25mm} |W|\right\} &\hspace{-1.5mm}i = \lceil
\frac{|W|}{S} \rceil.
\end{cases}  \label{eq:jjjjffff0}
\end{equation} 
%
See Fig.~\ref{fig:set_3} for the illustration of indices in $W_i$.
Note that when $\left\lceil \frac{|W|}{S} \right\rceil > \frac{|W|}{S}$, the last
set $W_{\left\lceil \frac{|W|}{S} \right\rceil}$ has less than $S$ elements.
\begin{figure}[t]
\begin{center}
\includegraphics[width = 75 mm]{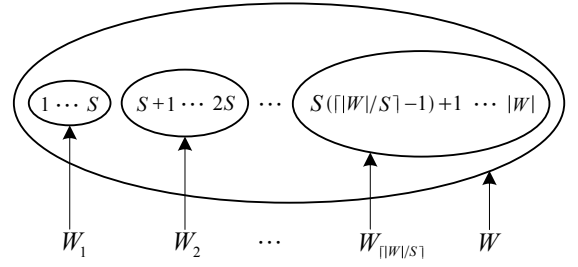}
\caption{Illustration of indices in $W_i$.} \label{fig:set_3}
\end{center}
\end{figure}

Observe that
  \begin{equation}
    \langle \mathbf{u}, \mathbf{z} \rangle = \langle \mathbf{u}_W, \mathbf{z}_W
    \rangle
    \leq \sum_i |\langle \mathbf{u}_{W_i}, \mathbf{z}_{W_i} \rangle|
    \leq \sum_i \|\mathbf{u}_{W_i}\|_2 \|\mathbf{z}_{W_i}\|_2,
  \end{equation}
where the second inequality is due to the H\"{o}lder's inequality.
By the definition of $U$, we have $\|\mathbf{u}_{U}\|_2 \geq
\|\mathbf{u}_{W_1}\|_2 = \max_i \|\mathbf{u}_{W_i}\|_2$ and hence
  \begin{eqnarray}
    \langle \mathbf{u}, \mathbf{z} \rangle
    &\leq& \|\mathbf{u}_{U}\|_2 \sum_i  \|\mathbf{z}_{W_i}\|_2 \label{eq:2233441} \\
    &\leq& \|\mathbf{u}_{U}\|_2 \left(\left\lceil \frac{|W|}{S} \right\rceil \sum_i  \|\mathbf{z}_{W_i}\|_2^2 \right)^{1/2} \label{eq:2233442} \\
    &=& \left(\left\lceil \frac{|W|}{S} \right\rceil \right)^{1/2} \|\mathbf{u}_{U}\|_2
    \|\mathbf{z}_W\|_2,
  \end{eqnarray}
where \eqref{eq:2233442} follows from the fact that $\sum_{i =
1}^{d} a_i \leq \left(d \sum_{i = 1}^{d} a_i^2 \right)^{1/2}$ with
$a_i = \|\mathbf{z}_{W_i}\|_2$ and $d = \left\lceil \frac{|W|}{S} \right\rceil$.

Now, we consider the alternative case ($S > |W|$). In this
case, it is clear that $\left(\left\lceil \frac{|W|}{S} \right\rceil \right)^{1/2} = 1$ and
$\|\mathbf{u}_{U}\|_2 \geq \|\mathbf{u}_{W}\|_2$, and hence
\begin{eqnarray}
  \left(\left\lceil \frac{|W|}{S} \right\rceil \right)^{1/2} \|\mathbf{u}_{U}\|_2
  \|\mathbf{z}_W\|_2 &=& \|\mathbf{u}_{U}\|_2
  \|\mathbf{z}_W\|_2 \nonumber \\
  &\geq& \|\mathbf{u}_{W}\|_2
  \|\mathbf{z}_W\|_2 \nonumber \\
  &\geq& \langle \mathbf{u}_W, \mathbf{z}_W \rangle \label{eq:h} \nonumber \\
  &=& \langle \mathbf{u}, \mathbf{z} \rangle,
\end{eqnarray}
which completes the proof.   
\end{proof}

\section{Proof of Proposition \ref{prop:residual}} \label{app:prop2}
\begin{proof} 
Recall from Proposition \ref{prop:rk} that for
given $\Gamma^{k}$ and any integer $l \geq k$, the residual of gOMP
satisfies
\begin{eqnarray}  
\lefteqn{ \|\mathbf{r}^l\|_2^2 - \|\mathbf{r}^{l + 1}\|_2^2 \geq
\frac{1 - \delta_{|{\Gamma}^k_{\tau} \cup T^l |}} {(1 + \delta_{S})
\left\lceil \frac{|{\Gamma}^k_\tau |}{S} \right\rceil}} \nonumber \\
&& ~~~~~\times \left( \|\mathbf{r}^l \|_2^2 -
\|\mathbf{\Phi}_{{\Gamma^{k}} \backslash {\Gamma}^k_\tau}
\mathbf{x}_{{\Gamma^{k}} \backslash {\Gamma}^k_\tau} + \mathbf{v}
\|_2^2 \right),~~~~~\label{eq:residual3qq}
\end{eqnarray}
where $\tau = 1, 2, \cdots, \max \left\{0, \left\lceil \log_2 \frac{|\Gamma^{k}|}{S} \right\rceil \right\} + 1$. 
Since $a > 1 - \exp(- a)$ for $a > 0$ and
$$\frac{1 - \delta_{|{\Gamma}^k_{\tau} \cup T^l |}} {\left\lceil
\frac{|{\Gamma}^k_\tau |}{S} \right\rceil(1 + \delta_{S})} > 0,$$ we have
\begin{equation}
 \frac{1 - \delta_{|{\Gamma}^k_{\tau} \cup T^l |}} {\left\lceil
\frac{|{\Gamma}^k_\tau |}{S} \right\rceil(1 + \delta_{S})} \geq 1 - \exp \left( -
\frac{1 - \delta_{|{\Gamma}^k_{\tau} \cup T^l |}} {\left\lceil
\frac{|{\Gamma}^k_\tau |}{S} \right\rceil(1 + \delta_{S})}
 \right). \label{eq:youma?0}
\end{equation}
Using \eqref{eq:residual3qq} and \eqref{eq:youma?0},
\begin{eqnarray} \label{eq:residual3good}
 \lefteqn{ \|\mathbf{r}^l\|_2^2 - \|\mathbf{r}^{l + 1}\|_2^2 \geq  \left(1
- \exp \left( - \frac{1 - \delta_{|{\Gamma}^k_{\tau} \cup T^l |}}
{\left\lceil
\frac{|{\Gamma}^k_\tau |}{S} \right\rceil(1 + \delta_{S})}
 \right) \right)} \nonumber \\
&& ~~~~~~~~~~~~~~~~\times  \left( \|\mathbf{r}^l \|_2^2 -
\|\mathbf{\Phi}_{{\Gamma^{k}} \backslash {\Gamma}^k_\tau}
\mathbf{x}_{{\Gamma^{k}} \backslash {\Gamma}^k_\tau} + \mathbf{v}
\|_2^2 \right).~~~~~~
\end{eqnarray}
Subtracting both sides of \eqref{eq:residual3good} by $\|
\mathbf{r}^l \|_2^2 - \| \mathbf{\Phi}_{\Gamma^k \backslash
{\Gamma}^k_\tau} \mathbf{x}_{\Gamma^k \backslash {\Gamma}^k_\tau} +
 \mathbf{v}\|_2^2$, we have
\begin{eqnarray}
 \lefteqn{\|\mathbf{r}^{l + 1}  \|_2^2 - \| \mathbf{\Phi}_{{\Gamma^{k}}
\backslash {\Gamma}^k_\tau} \mathbf{x}_{{\Gamma^{k}} \backslash
{\Gamma}^k_\tau} +
 \mathbf{v}\|_2^2} \nonumber \\
&& ~~~~~~~~~~~~\leq \exp \left( - \frac{1 -
\delta_{|{\Gamma}^k_{\tau} \cup T^l
|}} {\left\lceil
\frac{|{\Gamma}^k_\tau |}{S} \right\rceil(1 + \delta_{S})} \right)  \nonumber \\
&& ~~~~~~~~~~~~~~~\times (\| \mathbf{r}^l  \|_2^2 - \|
\mathbf{\Phi}_{{\Gamma^{k}} \backslash {\Gamma}^k_\tau}
\mathbf{x}_{{\Gamma^{k}}
 \backslash {\Gamma}^k_\tau} + \mathbf{v} \|_2^2), \nonumber
\end{eqnarray}
and also
\begin{eqnarray}
 \lefteqn{\| \mathbf{r}^{l + 2}  \|_2^2 -  \| \mathbf{\Phi}_{{\Gamma^{k}}
\backslash {\Gamma}^k_\tau} \mathbf{x}_{{\Gamma^{k}} \backslash
{\Gamma}^k_\tau}  + \mathbf{v}\|_2^2} \nonumber \\
&& ~~~~~~~~~~~~\leq \exp \left( - \frac{1 -
\delta_{|{\Gamma}^k_{\tau} \cup T^{l + 1} |}} {\left\lceil
\frac{|{\Gamma}^k_\tau |}{S} \right\rceil(1 + \delta_{S})} \right) \nonumber \\
&& ~~~~~~~~~~~~~~~\times (\|\mathbf{r}^{l + 1}  \|_2^2 -  \|
\mathbf{\Phi}_{{\Gamma^{k}} \backslash {\Gamma}^k_\tau}
\mathbf{x}_{{\Gamma^{k}} \backslash {\Gamma}^k_\tau}   +
 \mathbf{v}\|_2^2),\nonumber \\
 &&~~~~~~~~~~~~~\vdots  \nonumber \\
 \lefteqn{\| \mathbf{r}^{l + \Delta l}  \|_2^2 -  \|
\mathbf{\Phi}_{{\Gamma^{k}} \backslash {\Gamma}^k_\tau}
\mathbf{x}_{{\Gamma^{k}} \backslash {\Gamma}^k_\tau}  +
\mathbf{v}\|_2^2} \nonumber \\
&& ~~~~~~~~~~~~\leq \exp \left( - \frac{1 -
\delta_{|{\Gamma}^k_{\tau} \cup T^{l + \Delta l - 1} |}} {\left\lceil
\frac{|{\Gamma}^k_\tau |}{S} \right\rceil(1 + \delta_{S})} \right) \nonumber \\
&& ~~~~~~~~~~~~~~~\times (\| \mathbf{r}^{l + \Delta l - 1}  \|_2^2 -
{ \| \mathbf{\Phi}_{{\Gamma^{k}} \backslash {\Gamma}^k_\tau}
\mathbf{x}_{{\Gamma^{k}} \backslash
   {\Gamma}^k_\tau}  + \mathbf{v}
   \|_2^2}). \nonumber
\end{eqnarray}
Some additional manipulations yield the following result.
\begin{eqnarray}
 \lefteqn{\| \mathbf{r}^{l + \Delta l}  \|_2^2 -  \|
\mathbf{\Phi}_{{\Gamma^{k}} \backslash {\Gamma}^k_\tau}
\mathbf{x}_{{\Gamma^{k}} \backslash {\Gamma}^k_\tau}  +
\mathbf{v}\|_2^2} \nonumber \\
&& ~~~~~~~~~~~~\leq \prod_{i = l}^{l + \Delta l - 1} \exp \left( -
\frac{1 - \delta_{|{\Gamma}^k_{\tau} \cup T^i |}} {\left\lceil
\frac{|{\Gamma}^k_\tau |}{S} \right\rceil(1 + \delta_{S})} \right) \nonumber \\
&& ~~~~~~~~~~~~~~~\times
 (\| \mathbf{r}^{l}  \|_2^2 - { \|
\mathbf{\Phi}_{{\Gamma^{k}} \backslash {\Gamma}^k_\tau}
\mathbf{x}_{{\Gamma^{k}} \backslash {\Gamma}^k_\tau}
    + \mathbf{v} \|_2^2}). \nonumber
\end{eqnarray}
Since $\delta_{ | {\Gamma}^k_\tau \cup T^{l} |} \leq \delta_{ |
{\Gamma}^k_\tau \cup T^{l + 1} |} \leq \cdots \leq \delta_{ |
{\Gamma}^k_\tau \cup T^{l + \Delta l - 1} |}$, we further have
\begin{eqnarray}
\lefteqn{\| \mathbf{r}^{l + \Delta l} \|_2^2 -
\|\mathbf{\Phi}_{{\Gamma^{k}} \backslash {\Gamma}^k_\tau}
\mathbf{x}_{{\Gamma^{k}} \backslash {\Gamma}^k_\tau} +
\mathbf{v}\|_2^2} \nonumber \\
&& ~~~~~~~~~\leq C_{\tau,l, \Delta l} \left(\| \mathbf{r}^l \|_2^2 +
\|\mathbf{\Phi}_{{\Gamma^{k}} \backslash {\Gamma}^k_\tau}
\mathbf{x}_{{\Gamma^{k}} \backslash {\Gamma}^k_\tau} +
\mathbf{v}\|_2^2\right), \nonumber
  \label{eq:good1}
\end{eqnarray}
where
\begin{equation}
C_{\tau,l, \Delta l} = \exp \left( - \frac{\Delta l (1 - \delta_{|
{\Gamma}^k_\tau \cup T^{l + \Delta l - 1}|})} {\left\lceil
\frac{|{\Gamma}^k_\tau |}{S} \right\rceil (1 + \delta_{S})} \right),
\label{eq:constantc}
\end{equation}
which completes the proof.   
\end{proof}

\bibliographystyle{IEEEbib}
\bibliography{CS_refs}

\end{document}